\numberwithin{equation}{section}
\numberwithin{figure}{section}
\newcommand{\lyxaddress}[1]{
	\par {\raggedright #1
	\vspace{1.4em}
	\noindent\par}
}
\theoremstyle{plain}
\newtheorem{thm}{\protect\theoremname}
\theoremstyle{plain}
\newtheorem{lem}[thm]{\protect\lemmaname}
\theoremstyle{remark}
\newtheorem{rem}[thm]{\protect\remarkname}
\theoremstyle{plain}
\newtheorem{prop}[thm]{\protect\propositionname}
\theoremstyle{definition}
\newtheorem{defn}[thm]{\protect\definitionname}
\newenvironment{lyxlist}[1]
	{\begin{list}{}
		{\settowidth{\labelwidth}{#1}
		 \setlength{\leftmargin}{\labelwidth}
		 \addtolength{\leftmargin}{\labelsep}
		 }}
	{\end{list}}
\theoremstyle{plain}
\newtheorem{cor}[thm]{\protect\corollaryname}
\providecommand{\corollaryname}{Corollary}
\providecommand{\definitionname}{Definition}
\providecommand{\lemmaname}{Lemma}
\providecommand{\propositionname}{Proposition}
\providecommand{\remarkname}{Remark}
\providecommand{\theoremname}{Theorem}
\begin{document}
\title{Characteristic-Dependent Linear Rank Inequalities via Complementary
Vector Spaces}
\author{Victor Peña\thanks{e-mail: \protect\href{mailto:vbpenam\%40unal.edu.co}{vbpenam@unal.edu.co}}
\& Humberto Sarria\thanks{e-mail: \protect\href{mailto:hsarriaz\%40unal.edu.co}{hsarriaz@unal.edu.co}}}
\institution{Departamento de Matemáticas, Universidad Nacional de Colombia, Bogotá,
Colombia}
\maketitle
\begin{abstract}
A characteristic-dependent linear rank inequality is a linear inequality
that holds by ranks of subspaces of a vector space over a finite field
of determined characteristic, and does not in general hold over other
characteristics. In this paper, we produce new characteristic-dependent
linear rank inequalities by an alternative technique to the usual
Dougherty's inverse function method \cite{9}. We take up some ideas
of Blasiak \cite{4}, applied to certain complementary vector spaces,
in order to produce them. Also, we present some applications to network
coding. In particular, for each finite or co-finite set of primes
$P$, we show that there exists a sequence of networks $\mathcal{N}\left(k\right)$
in which each member is linearly solvable over a field if and only
if the characteristic of the field is in $P$, and the linear capacity,
over fields whose characteristic is not in $P$, $\rightarrow0$ as
$k\rightarrow\infty$.
\end{abstract}
\begin{keywords}
Network coding, index coding, matroids, linear rank inequalities,
complementary vector spaces.
\end{keywords}

\lyxaddress{\textbf{\footnotesize{}Subject Classification: 68P30\vspace{-0.5cm}
}}

\section*{Introduction}

Network Coding is a branch of Information Theory introduced by Ahlswede,
Cai, Li and Yeung in 2000 that studies the problem of information
flow through a network \cite{1}. It has been proven that network
coding is a great tool for improving information management in contrast
to the usual way routing. It is known that an algorithm exists to
calculate the routing capacity of a network \cite{5} but it is unknown
if there is one for the linear capacity of a network, much less for
the non-linear capacity \cite{6}. Information inequalities play an
important role in the calculation of these capacities because upper
bounds have been found by treating the messages involved in the network
as random variables. So, any advance in the understanding of the regions
determined by entropies of random variables implies an advance in
network coding \cite{8,6,9,17}.

There are networks whose linear capacity is smaller than the non-linear
capacity \cite{6}. Therefore, in order to understand the linear capacity
of a network, it is necessary to study inequalities that are valid
for random variables induced by finite dimensional vector spaces.
It is well known that the entropy of these random variables is completely
determined by the dimension (usually referred to as rank) of the associated
vector spaces. The mentioned inequalities are called \emph{linear
rank inequalities}. Formally, a linear rank inequality is a linear
inequality that is always satisfied by ranks of subspaces of a vector
space. All information inequalities are linear rank inequalities but
not all linear rank inequalities are information inequalities \cite{key-3}.
The first example of a linear rank inequality that is not an information
inequality was found by Ingleton in \cite{13}. This inequality was
useful to calculate the linear capacity (over any field) of the Vámos
network \cite{7}. Other inequalities have been presented in \cite{8,10,19}. 

The linear capacity of a network depends on the characteristic of
the scalar field associated to the vector space of the network codes.
In other words, it is possible to achieve a higher rate of linear
communication by choosing one characteristic over another, an example
is the Fano network \cite{6,7}. Therefore, when we study linear capacities
over specific fields, it is also convenient to work with ``linear
rank inequalities'' that depend on the characteristic of the scalar
field associated to vector space. A \emph{characteristic-dependent
linear rank inequality} is a linear inequality that is always satisfied
by ranks of subspaces of a vector space over fields of certain characteristic
and does not in general hold over other characteristics. These are
the appropriate inequalities to calculate capacities over specific
fields. It is worth noting that all linear rank inequalities for up
to and including five variables are known and are all characteristic-independent
\cite{8}. The first two characteristic-dependent linear rank inequalities
(over seven variables) were presented by Blasiak, Kleinberg and Lubetzky
in 2011. Specifically, the first inequality holds for all fields whose
characteristic is not two and does not in general hold over characteristic
two. The second inequality holds for all fields whose characteristic
is two and does not in general hold over characteristics other than
two \cite{4}. Their application used linear program whose constraints
express information inequalities (and their inequalities) to produce
separation between linear and non-linear network coding. Using lexicographic
products, the separation is amplified, yielding a sequence of networks
in which the difference in linear and non-linear capacity is bigger
in each network.

In 2013, Dougherty, Freiling and Zeger presented two new characteristic-dependent
linear rank inequalities; again, one inequality is valid for characteristic
two and the other inequality is valid for every characteristic except
for two \cite{9}. The technique used to produce these inequalities
is called \emph{The inverse function method }and is different from
the technique used by Blasiak et al. in their inequalities. These
inequalities are then used to provide upper bounds for the linear
capacity of the Fano network and non-Fano network. In 2014, E. Freiling
in \cite[Ph.D. thesis]{11}, for each finite or co-finite set of prime
numbers, constructed a characteristic-dependent linear rank inequality
that is valid only for vector spaces over fields whose characteristic
is in the aforementioned set. The technique that Freiling used is
a generalization of the inverse function method. He also showed that
for each finite or co-finite set of primes, there exists a network
that is linearly solvable over a field if and only if the characteristic
of the field is in the set. In this thesis appears the natural question:
Are there other techniques to tighten these inequalities?

\textbf{Organization of the work and contributions.} This work is
organized into two sections. In section 1, we introduce the basic
definitions related to Linear Algebra and Information Theory. Then,
we produce new characteristic-dependent linear rank inequalities by
taking the central ideas of Blasiak et al. \cite{4} but modifying
some of their arguments: We take a matrix which is a generalization
matrix of the representation matrix of the Fano and non-Fano matroids.
Some vectorial matroids associated to this matrix are known in \cite{14}.
This matrix is used as a guide to extract some properties of vector
spaces and obtain certain conditional inequalities. Then, we turn
these inequalities into characteristic-dependent linear rank inequalities.
We also present some cases when the desired inequalities are indeed
true over any field. In section 2, we review some concepts of Network
Coding and Index Coding, as well as some results of Blasiak \cite{4}
in order to define our linear programs which are useful for our application
theorem to network coding: For each finite or co-finite set of primes
$P$, we show that there exists a sequence of networks $\mathcal{N}\left(k\right)$
in which each member is linearly solvable over a field if and only
if the characteristic of the field is in $P$, and the linear capacity,
over fields whose characteristic is not in $P$, $\rightarrow0$ as
$k\rightarrow\infty$. This means that we have a sequence of solvable
networks in which we can achieve a higher rate of linear communication
by choosing one characteristic in $P$ over another in the complement
set of $P$, and the rate of linear communication on this last set
can be as bad as we want. We remark that these networks are associated
to index coding instances from vector matroids whose matrix is used
in section 1. Also, we remark that the gap in capacities is obtained
via lexicograph product and improves the above mentioned result of
Freiling \cite[Theorem 3.3.1 and 3.3.2]{11}. Additionally, as a corollary
we present many sequences of networks which the rate of (non-linear)
communication is better than the rate of linear communication. It
is notable that one of these sequences is a modificated version of
the sequence that was presented by Blasiak et al. \cite[Theorem 1.2]{4}.
By last, we show that our sequences of networks have a good coding
gain \cite{12}.

\section{Characteristic-dependent linear rank inequalities}

Let $A$, $A_{1}$, $\ldots$, $A_{n}$, $B$ be vector subspaces
of a finite dimensional vector space $V$. There is a correspondence
between linear rank inequalities and information inequalities associated
to certain class of random variables induced by vector spaces, see
\cite[Theorem 2]{key-3}. So, we can use notation of information theory
to refer dimension of vector spaces. Let $A_{I}:=\underset{i\in I}{\sum}A_{i}$
denote the span or sum of $A_{i}$, $i\in I\subseteq\left[n\right]:=\left\{ 1,2,\ldots,n\right\} $,
the entropy of $A_{I}$ is the dimension, $\text{H}\left(A_{I}\right)=\dim\left(A_{i},i\in I\right)$.
The mutual information of $A$ and $B$ is $\text{I}\left(A;B\right)=\dim\left(A\cap B\right)$.
If $B$ is a subspace of a subspace $A$, then we denote the \emph{codimension}
of $B$ in $A$ by $\text{codim}_{A}\left(B\right):=\text{H}\left(A\right)-\text{H}\left(B\right)$.
For $A$ and $B$ vector subspaces, $\text{H}\left(A\mid B\right)=\text{codim}_{A}\left(A\cap B\right)$. 

The sum $A+B$ is a direct sum if and only if $A\cap B=O$, the notation
for such a sum is $A\oplus B$. Subspaces $A_{1}$, ..., $A_{n}$
are called \emph{mutually complementary} subspaces in $V$ if every
vector of $V$ has an unique representation as a sum of elements of
$A_{1}$, ..., $A_{n}$. Equivalently, they are mutually complementary
subspaces in $V$ if and only if $V=A_{1}\oplus\cdots\oplus A_{n}$.
In this case, $\pi_{S}$ denotes the canonical projection function
$V\twoheadrightarrow\underset{i\in S}{\bigoplus}A_{i}$.

In the principal proof of this section we will need to calculate the
difference in dimension between vector spaces, so inequalities associated
to codimension given by the following two lemmas are important. 
\begin{lem}
\label{lemma basico 1desigualdad condicional rango lineal basica general}For
any subspaces $A_{1},\ldots,A_{m},A_{1}',\ldots,A_{m}'$ of finite
dimensional vector space $V$ such that $A_{i}'\leq A_{i}$,
\[
\mathrm{codim}_{A_{\left[m\right]}}A_{\left[m\right]}'\leq\stackrel[i=1]{m}{\sum}\mathrm{codim}_{A_{i}}A_{i}'
\]
with equality if and only if $A_{k+1}\cap A_{\left[k\right]}=A_{k+1}'\cap A_{\left[k\right]}'$
for all $k$.
\end{lem}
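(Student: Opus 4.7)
My plan is to prove this by induction on $m$. The base case $m=1$ is the tautology $\mathrm{codim}_{A_1}A_1' = \mathrm{codim}_{A_1}A_1'$ (and the equality condition is vacuous). For $m=2$, I would apply the standard modular dimension formula $\dim(X+Y)=\dim X+\dim Y-\dim(X\cap Y)$ to both $(A_1,A_2)$ and $(A_1',A_2')$ and subtract to obtain
\[
\mathrm{codim}_{A_1+A_2}(A_1'+A_2') \;=\; \mathrm{codim}_{A_1}A_1' + \mathrm{codim}_{A_2}A_2' \;-\; \bigl[\dim(A_1\cap A_2)-\dim(A_1'\cap A_2')\bigr].
\]
Since $A_i'\leq A_i$ forces $A_1'\cap A_2'\leq A_1\cap A_2$, the bracketed term is nonnegative, giving the desired inequality, and equality holds iff $A_1\cap A_2=A_1'\cap A_2'$.

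For the inductive step, assume the statement for $m-1$. Applying the already-proven $m=2$ case to the pair $(A_{[m-1]},A_m)$ with primed counterparts $(A_{[m-1]}',A_m')$ yields
\[
\mathrm{codim}_{A_{[m]}}A_{[m]}' \;\leq\; \mathrm{codim}_{A_{[m-1]}}A_{[m-1]}' + \mathrm{codim}_{A_m}A_m',
\]
with equality iff $A_m\cap A_{[m-1]} = A_m'\cap A_{[m-1]}'$, which is exactly the condition for $k=m-1$. The inductive hypothesis bounds the first summand on the right by $\sum_{i=1}^{m-1}\mathrm{codim}_{A_i}A_i'$, with equality iff the conditions hold for all $k\leq m-2$. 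Concatenating the two bounds gives the full inequality.

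The equality characterization follows by chaining the equality clauses: both the $m=2$ step and the inductive hypothesis must be sharp, which amounts to $A_{k+1}\cap A_{[k]} = A_{k+1}'\cap A_{[k]}'$ for every $k\in\{1,\dots,m-1\}$. I do not expect any real obstacle; the proof is essentially a clean induction on top of the modular law, and the hypothesis $A_i'\leq A_i$ is precisely what is needed to control the intersection terms monotonically at each step.
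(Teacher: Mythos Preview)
Your induction via the modular dimension formula is correct, and both the inequality and the equality characterization go through exactly as you describe. The only thing to be careful about is that in the $m=2$ step, equality of dimensions $\dim(A_1\cap A_2)=\dim(A_1'\cap A_2')$ upgrades to equality of subspaces precisely because $A_1'\cap A_2'\leq A_1\cap A_2$; you noted this hypothesis is what makes the argument work, and it is used at exactly that point.

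As for comparison with the paper: the paper states this lemma as a basic preliminary fact and does not supply a proof. Your argument is the standard one and is presumably what the authors had in mind; there is nothing to contrast.
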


\smallskip{}

\begin{lem}
\label{lema basico 2 }For any subspaces $A$, $B$, $C$ of finite
dimensional vector space $V$ such that $B\leq A$, 
\[
\mathrm{codim}_{\left(A\cap C\right)}\left(B\cap C\right)\leq\mathrm{codim}_{A}B
\]
with equality if and only if there exists a subspace of $C$ which
is complementary to $B$ in $A$.
\end{lem}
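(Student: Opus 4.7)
The plan is to realise the codimension inequality as an instance of rank--nullity applied to a naturally defined linear map, and then to read off the equality condition from when that map is surjective.

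First I would consider the composition
\[
\phi\colon A\cap C\hookrightarrow A\twoheadrightarrow A/B,
\]
where the second arrow is the canonical projection (this makes sense because $B\leq A$). Since $B\leq A$, the kernel of $\phi$ is $(A\cap C)\cap B = B\cap C$. By rank--nullity,
\[
\dim(A\cap C)-\dim(B\cap C)=\dim\mathrm{Im}(\phi)\leq\dim(A/B)=\dim A-\dim B,
\]
which is exactly $\mathrm{codim}_{A\cap C}(B\cap C)\leq\mathrm{codim}_A B$. This gives the inequality, and equality clearly holds if and only if $\phi$ is surjective, equivalently $A=B+(A\cap C)$.

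The remaining task is to identify this condition with the existence of a subspace of $C$ complementary to $B$ in $A$. For the easy direction, if $D\leq C$ satisfies $A=B\oplus D$, then $D\leq A\cap C$, so $A=B+D\subseteq B+(A\cap C)$ and $\phi$ is surjective. For the converse, assume $A=B+(A\cap C)$ and choose $D\leq A\cap C$ to be a complement of $B\cap C$ inside $A\cap C$, so that $A\cap C=(B\cap C)\oplus D$. Then $D\leq C$, and $D\cap B\leq(A\cap C)\cap B = B\cap C$, whence $D\cap B = D\cap(B\cap C)=0$. Moreover $B+D = B+(B\cap C)+D = B+(A\cap C)=A$, so $A=B\oplus D$ with $D\leq C$, as required.

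I do not expect a genuine obstacle: the only place demanding care is the equality case, specifically the converse step, where one must remember that a complement of $B\cap C$ inside $A\cap C$ automatically lies in $C$ and automatically intersects $B$ trivially. Once this bookkeeping is carried out the proof is complete, and it is worth noting that the argument uses the hypothesis $B\leq A$ exactly twice: once to identify $\ker\phi$ with $B\cap C$, and once to ensure that a subspace of $A\cap C$ complementing $B\cap C$ meets $B$ only at $0$.
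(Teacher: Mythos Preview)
Your proof is correct. The paper states this lemma without proof (it is presented as a preliminary fact used later in the main arguments), so there is no proof in the paper to compare against; your rank--nullity argument via the map $A\cap C\to A/B$ is clean, and your treatment of the equality case---reducing to $A=B+(A\cap C)$ and then extracting a complement $D$ of $B\cap C$ inside $A\cap C$---is carefully and correctly carried out.
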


\textbf{Inequalities using a suitable matrix as a guide.} For $n\geq2$,
$L_{n}$ denotes the $\left(n+1\right)\times\left(2n+3\right)$-matrix
\[
\begin{array}{c}
A_{1}\,\cdots\,A_{n}\,A_{n+1}B_{1}\,\cdots\,B_{n}\,B_{n+1}\,C\\
\left(\begin{array}{ccccccccc}
1 & \cdots & 0 & 0 & 0 & \cdots & 1 & 1 & 1\\
0 & \cdots & 0 & 0 & 1 & \cdots & 1 & 1 & 1\\
\vdots & \cdot & \vdots & \vdots & \vdots & \cdot & \vdots & \vdots & \vdots\\
0 & \cdots & 1 & 0 & 1 & \cdots & 0 & 1 & 1\\
0 & \cdots & 0 & 1 & 1 & \cdots & 1 & 0 & 1
\end{array}\right)
\end{array}.
\]
The rank of the submatrix $B_{\left[n+1\right]}$ depends on the field
where its inputs are defined: If the characteristic of the field divides
$n$, the rank is $n$; and if the characteristic of the field does
not divide $n$, the rank is maximum. Lemmas \ref{lema desigualdad condicional general caracteristica que divide n}
and \ref{lema desigualdad condicional sobre caracteristica que no dividen n}
(with the help of Lemma \ref{lemma 1:  distingue caracteristicas que dividen n y la que no lo divide})
present a general version of this. Specifically, these lemmas abstract
the properties of linear independence between the vector spaces (over
a field with certain characteristic) generated by the columns of $L_{n}$
to obtain inequalities associated to the rank of the vector space
generates by the columns of the submatrix $B_{\left[n+1\right]}$
and the rank of the vector space generates by the column $C$.
\begin{lem}
\label{lemma 1:  distingue caracteristicas que dividen n y la que no lo divide}Let
$A_{1}$, $A_{2}$, $\ldots$, $A_{n+1}$ be mutually complementary
vector subspaces of a vector space $V$ over a field $\mathbb{F}$,
and $C$ a subspace of $V$ such that the sum of $\stackrel[i=1,i\neq k]{n+1}{\bigoplus}A_{i}$
and $C$ is a direct sum for all $k$. Then
\[
\mathrm{H}\left(\left\{ \pi_{\left[n+1\right]-i}\left(C\right)\right\} _{i=1}^{n+1}\right)=\left\{ \begin{array}{c}
n\mathrm{H}\left(C\right)\,\,\,\,\,\,\,\,\,\,\,\,\,\text{ if }\text{char}\left(\mathbb{F}\right)\mid n\\
\left(n+1\right)\mathrm{H}\left(C\right)\text{ if }\text{char}\left(\mathbb{F}\right)\nmid n\text{.}
\end{array}\right.
\]
\end{lem}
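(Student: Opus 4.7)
The plan is to introduce the natural linear map $\Phi\colon C^{n+1}\to V$ given by $\Phi(c_1,\ldots,c_{n+1})=\sum_{i=1}^{n+1}\pi_{[n+1]-i}(c_i)$, whose image is precisely the sum $\sum_{i}\pi_{[n+1]-i}(C)$, and then extract the characteristic dependence from a single equation $nS=0$ satisfied by elements in $\ker\Phi$. Thus $\mathrm{H}(\{\pi_{[n+1]-i}(C)\})=(n+1)\mathrm{H}(C)-\dim\ker\Phi$, reducing the problem to computing $\dim\ker\Phi$.

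First, I would use the hypothesis to show that for every index $j$, the projection $\pi_j|_C\colon C\to A_j$ is injective: picking any $k\neq j$ (which exists because $n\geq 2$) we have $A_j\subseteq\bigoplus_{i\neq k}A_i$, hence $C\cap A_j\subseteq C\cap\bigoplus_{i\neq k}A_i=O$. By the same token, $\ker(\pi_{[n+1]-i}|_C)=C\cap A_i=O$, so each $\pi_{[n+1]-i}|_C$ is injective and $\mathrm{H}(\pi_{[n+1]-i}(C))=\mathrm{H}(C)$.

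Next I would compute $\ker\Phi$ using the direct sum $V=A_1\oplus\cdots\oplus A_{n+1}$. The $A_j$-component of $\Phi(c_1,\ldots,c_{n+1})$ is $\sum_{i\neq j}\pi_j(c_i)=\pi_j\bigl(\sum_{i\neq j}c_i\bigr)$, so vanishing of $\Phi$ in every direct summand together with injectivity of $\pi_j|_C$ forces $\sum_{i\neq j}c_i=0$ in $C$ for each $j\in[n+1]$. Setting $S:=\sum_{i=1}^{n+1}c_i$, these equations read $c_j=S$ for every $j$, and summing them gives $nS=0$.

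The main (and really only) point of the argument is then the characteristic split coming from $nS=0$. If $\mathrm{char}(\mathbb{F})\mid n$ the relation is automatic, so $\ker\Phi$ is the diagonal $\{(c,\ldots,c):c\in C\}$ of dimension $\mathrm{H}(C)$, yielding image dimension $(n+1)\mathrm{H}(C)-\mathrm{H}(C)=n\mathrm{H}(C)$. If $\mathrm{char}(\mathbb{F})\nmid n$, then $n$ is invertible in $\mathbb{F}$, so $nS=0$ forces $S=0$ and hence $c_j=0$ for every $j$, giving $\ker\Phi=O$ and image dimension $(n+1)\mathrm{H}(C)$. The only real obstacle is packaging the projections into the map $\Phi$ so that the single scalar relation $nS=0$ isolates the characteristic dependence cleanly; once that is done the computation is a straightforward rank-nullity argument.
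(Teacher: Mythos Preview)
Your argument is correct and takes a cleaner route than the paper's. One small slip: the justification you give for ``$\pi_j|_C$ is injective'' actually establishes $C\cap A_j=O$, which is the kernel condition for $\pi_{[n+1]-j}|_C$, not for $\pi_j|_C$. The injectivity of $\pi_j|_C$ that you actually use follows directly from the hypothesis with $k=j$, namely $\ker(\pi_j|_C)=C\cap\bigoplus_{i\neq j}A_i=O$; so the claim is true, only the sentence pointing to it is misrouted.

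On the comparison: the paper handles the two characteristic cases separately and by explicit computation. When $\mathrm{char}(\mathbb{F})\mid n$ it writes down the identity $\pi_{[n+1]-(n+1)}(v)=\tfrac{1}{n-1}\sum_{i=1}^{n}\pi_{[n+1]-i}(v)$ for $v\in C$ (using that $n=0$ in $\mathbb{F}$), thereby placing one projection in the span of the other $n$, and then checks by a coordinate argument that those $n$ projections are independent. When $\mathrm{char}(\mathbb{F})\nmid n$ it runs an analogous independence check for all $n+1$ projections. Your approach unifies both cases: the map $\Phi$ and rank--nullity reduce everything to the single scalar relation $nS=0$, so the characteristic dichotomy appears exactly once rather than threading through two parallel verifications. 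The paper's version has the minor advantage of exhibiting the explicit linear dependence among the $\pi_{[n+1]-i}(C)$ in the divisible case; your version is shorter and isolates the dependence on $\mathrm{char}(\mathbb{F})$ more transparently.
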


\begin{proof}
\emph{We have the following claim:} A non-zero element of $C$ has
$n+1$ non-zero coordinates. Moreover, for all $i$, $\text{H}\left(\pi_{\left[n+1\right]-i}\left(C\right)\right)=\text{H}\left(C\right)$.\emph{
Proof of claim.} Let $v\in C$, we can write $v$ as $\stackrel[i=1]{n+1}{\sum}v_{i}$,
where $v_{i}\in A_{i}$ for $i=1,\ldots,n+1$. If $v_{k}=0$ for some
$1\leq k\leq n+1$, then $v\in\stackrel[i=1,i\neq k]{n+1}{\bigoplus}A_{i}$
but $C$ is complementary to this space. It follows $v=O$.$\Square$

Now, we consider the case when $\text{char}\left(\mathbb{F}\right)$
divides $n$. For any $v=\stackrel[i=1]{n+1}{\sum}v_{i}\in V$, taking
into account that $n=0$ and $n-1$ is invertible in $\mathbb{F}$,
we get 
\[
\frac{1}{n-1}\stackrel[i=1]{n}{\sum}\pi_{\left[n+1\right]-i}\left(v\right)=\frac{1}{n-1}\stackrel[i=1]{n}{\sum}\stackrel[j=1,j\neq i]{n+1}{\sum}v_{j}\,\,\,\,\,\,\,\,\,\,\,\,\,\,\,\,\,\,\,\,\,\,\,\,\,\,\,\,\,\,\,\,\,\,\,\,
\]
\[
\,\,\,\,\,\,\,\,\,\,\,\,\,\,\,\,\,\,\,\,\,\,\,\,\,\,\,\,\,\,\,\,\,\,\,\,\,\,\,\,\,\,\,\,\,\,\,\,\,\,\,\,\,\,\,\,=\frac{1}{n-1}\stackrel[i=1]{n}{\sum}\stackrel[j=1,j\neq i]{n}{\sum}v_{j}+\cancelto{O}{\frac{n}{n-1}v_{n+1}}
\]
\[
=\stackrel[i=1]{n}{\sum}v_{i}\,\,\,\,\,\,\,\,\,\,\,\,\,\,\,\,\,\,\,
\]
\[
\,\,\,\,\,\,\,\,\,\,\,\,\,\,\,=\pi_{\left[n+1\right]-\left(n+1\right)}\left(v\right)\text{.}
\]
Hence, $\pi_{\left[n+1\right]-\left(n+1\right)}\left(C\right)\leq\stackrel[i=1]{n}{\sum}\pi_{\left[n+1\right]-i}\left(C\right)$.
Furthermore, the subspaces $\pi_{\left[n+1\right]-i}\left(C\right)$
with $i\in\left[n\right]$ form a direct sum. In effect, let $v_{i}=\stackrel[j=1]{n+1}{\sum}v_{i}^{j}\in C$,
$i\in\left[n\right]$ such that $\stackrel[i=1]{n}{\sum}\pi_{\left[n+1\right]-i}\left(v_{i}\right)=O\text{.}$
Then for every $1\leq k\leq n$, we get $\stackrel[i=1,i\neq k]{n}{\sum}v_{i}^{k}$
and $\stackrel[i=1]{n}{\sum}v_{i}^{n+1}$ are equal to zero. Then,
applying claim to each $1\leq k\leq n$, we get $\stackrel[i=1,i\neq k]{n}{\sum}v_{i}$
and $\stackrel[i=1]{n}{\sum}v_{i}$ are equal to zero vector. Thus,
for every $1\leq k\leq n$, we get $v_{k}=\stackrel[i=1]{n}{\sum}v_{i}-\stackrel[i=1,i\neq k]{n}{\sum}v_{i}=O$.
Consequently, the subspaces $\pi_{\left[n+1\right]-i}\left(C\right)$,
with $i\in\left[n\right]$, are mutually complementary. Applying claim
to this fact, we get $\text{H}\left(\pi_{\left[n+1\right]-i}\left(C\right),i\in\left[n+1\right]\right)=\text{H}\left(\pi_{\left[n+1\right]-i}\left(C\right),i\in\left[n\right]\right)=n\text{H}\left(C\right)$.
Now, we consider the case when $\text{char}\left(\mathbb{F}\right)$
does not divide $n$. It is enough to prove that $\stackrel[i=1]{n+1}{\sum}\pi_{\left[n+1\right]-i}\left(C\right)$
is a direct sum. In effect, for each $i=1$, $\ldots$, $n+1$ take
$v_{i}=\stackrel[j=1]{n+1}{\sum}v_{i}^{j}$ in $C$ such that $\stackrel[i=1]{n+1}{\sum}\pi_{\left[n+1\right]-i}\left(v_{i}\right)=O$.
Then for every $1\leq k\leq n+1$, we get $\stackrel[i=1,i\neq k]{n+1}{\sum}v_{i}^{k}=0$,
which for claim implies, $\stackrel[i=1,i\neq k]{n+1}{\sum}v_{i}=O$
for all $k$. Fixed $j$, add member to member all these inequalities
except the inequality corresponding to $k=j$, we get
\[
O=\stackrel[k=1,k\neq j]{n+1}{\sum}\left(\stackrel[\begin{array}{c}
i=1,i\neq k\end{array}]{n+1}{\sum}v_{i}\right).
\]
\[
=nv_{j}+\cancelto{O}{\left(n-1\right)\stackrel[i=1,i\neq j]{n+1}{\sum}v_{i}}.
\]
Since $\text{char}\left(\mathbb{F}\right)$ does not divide $n$,
$v_{j}=O$.
\end{proof}
\begin{rem}
We remark that a subspace $C$ as described in previous lemma holds
$\mathrm{H}\left(C\right)\leq1$.
\end{rem}

\begin{lem}
\label{lema desigualdad condicional general caracteristica que divide n}
Let $A_{1}$, $A_{2}$, $\ldots$, $A_{n+1}$, $B_{1}$, $B_{2}$,
$\ldots$, $B_{n+1}$, $C$ be subspaces of a finite-dimensional vector
space $V$ over a scalar field $\mathbb{F}$ whose field characteristic
divides $n$ and

(i) $A_{1}$,$\ldots$, $A_{n+1}$ are mutually complementary in $V$,
and subspaces $C$ and $A_{\left[n+1\right]-k}$ form a direct sum
for all $k$. 

(ii) $B_{k}\leq A_{\left[n+1\right]-k}\cap\left(A_{k}+C\right)$ for
all $k$.

Then $\mathrm{H}\left(B_{\left[n+1\right]}\right)\leq n\mathrm{H}\left(C\right)$.
\end{lem}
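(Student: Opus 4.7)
The plan is to show that each $B_k$ is actually contained in the projected subspace $\pi_{\left[n+1\right]-k}\left(C\right)$, and then apply Lemma \ref{lemma 1:  distingue caracteristicas que dividen n y la que no lo divide} directly.

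First I would unpack condition (ii). Take any $b\in B_{k}$. Since $b\in A_{k}+C$, we can write $b=a'+c$ with $a'\in A_{k}$ and $c\in C$. Now decompose $c$ via the direct sum $V=A_{1}\oplus\cdots\oplus A_{n+1}$ as $c=\sum_{i=1}^{n+1}\pi_{i}\left(c\right)$. Then
\[
b=\bigl(a'+\pi_{k}\left(c\right)\bigr)+\sum_{i\neq k}\pi_{i}\left(c\right),
\]
and the first summand lies in $A_{k}$ while the rest lies in $A_{\left[n+1\right]-k}$. Since $b\in A_{\left[n+1\right]-k}$ by (ii), uniqueness of the decomposition forces $a'=-\pi_{k}\left(c\right)$, so $b=\sum_{i\neq k}\pi_{i}\left(c\right)=\pi_{\left[n+1\right]-k}\left(c\right)$. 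Hence $B_{k}\leq\pi_{\left[n+1\right]-k}\left(C\right)$ for every $k\in\left[n+1\right]$.

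Summing these inclusions gives
\[
B_{\left[n+1\right]}\leq\sum_{k=1}^{n+1}\pi_{\left[n+1\right]-k}\left(C\right),
\]
so $\mathrm{H}\left(B_{\left[n+1\right]}\right)\leq\mathrm{H}\bigl(\{\pi_{\left[n+1\right]-k}\left(C\right)\}_{k=1}^{n+1}\bigr)$. At this point Lemma \ref{lemma 1:  distingue caracteristicas que dividen n y la que no lo divide} applies: hypothesis (i) guarantees exactly the setup needed for that lemma, and since $\mathrm{char}\left(\mathbb{F}\right)\mid n$, the right-hand side equals $n\,\mathrm{H}\left(C\right)$, yielding the claimed bound.

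The only subtle point is step one, where one must notice that the decomposition of $b$ through $A_{k}+C$ is not arbitrary: the hypothesis $b\in A_{\left[n+1\right]-k}$ forces the $A_{k}$-component $a'$ to exactly cancel the $A_{k}$-part of $c$, producing the projection $\pi_{\left[n+1\right]-k}\left(c\right)$. Once that identification is made, the characteristic-dependent bound is inherited directly from Lemma \ref{lemma 1:  distingue caracteristicas que dividen n y la que no lo divide}, so no further computation is needed.
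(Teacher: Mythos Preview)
Your proof is correct and follows essentially the same approach as the paper: both establish $B_{k}\leq\pi_{\left[n+1\right]-k}\left(C\right)$ and then invoke Lemma~\ref{lemma 1:  distingue caracteristicas que dividen n y la que no lo divide}. The only cosmetic difference is that the paper proves the full identity $\pi_{\left[n+1\right]-k}\left(C\right)=\left(C+A_{k}\right)\cap A_{\left[n+1\right]-k}$ before using hypothesis~(ii), whereas you extract directly the one containment actually needed.
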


\begin{proof}
By hypotheses (i) and condition of the characteristic, we apply lemma
\ref{lemma 1:  distingue caracteristicas que dividen n y la que no lo divide}
to get
\begin{equation}
\text{H}\left(\pi_{\left[n+1\right]-i}\left(C\right),i\in\left[n+1\right]\right)=n\text{H}\left(C\right)\text{.}\label{eq:aplicacion de lema 1 a lema2}
\end{equation}
Furthermore, $\pi_{\left[n+1\right]-k}\left(C\right)=\left(C+A_{k}\right)\cap A_{\left[n+1\right]-k}$,
for all $k$. In effect, let $v\in C$ such that $v=\stackrel[i=1]{n+1}{\sum}v_{i}$,
where $v_{i}\in A_{i}$, $i\in\left[n+1\right]$ and fixed $k\in\left[n+1\right]$.
Noting that $\pi_{\left[n+1\right]-k}\left(v\right)=\stackrel[i=1,i\neq k]{n+1}{\sum}v_{i}$$=v-v_{k}$,
we get $\pi_{\left[n+1\right]-k}\left(v\right)\in\left(C+A_{k}\right)\cap A_{\left[n+1\right]-k}$.
To prove the other contention, let $u\in\left(C+A_{k}\right)\cap A_{\left[n+1\right]-k}$.
Then, there exist $v\in C$ and $v_{i}\in A_{i}$, for each $i\in\left[n+1\right]$
such that $u=v-v_{k}=\stackrel[i=1,i\neq k]{n+1}{\sum}v_{i}$. Thus
$v=\stackrel[i=1]{n+1}{\sum}v_{i}$ and $u=\pi_{\left[n+1\right]-k}\left(v\right)\in\pi_{\left[n+1\right]-k}\left(C\right)$.
Thus, the desired equality is true. Therefore, using hypothesis (ii),
we have that $B_{k}\leq\pi_{\left[n+1\right]-k}\left(C\right)$ which
implies $\stackrel[k=1]{n+1}{\sum}B_{k}\leq\stackrel[k=1]{n+1}{\sum}\pi_{\left[n+1\right]-k}\left(C\right)$.
From this and equation (\ref{eq:aplicacion de lema 1 a lema2}), we
get $\text{H}\left(B_{\left[n+1\right]}\right)\leq n\text{H}\left(C\right).$
\end{proof}
\begin{lem}
\label{lema desigualdad condicional sobre caracteristica que no dividen n}Let
$A_{1}$, $A_{2}$, $\ldots$, $A_{n+1}$, $B_{1}$, $B_{2}$, $\ldots$,
$B_{n+1}$, $C$ be subspaces of a finite-dimensional vector space
$V$ over a scalar field $\mathbb{F}$ whose field characteristic
does not divide $n$ and

(i) $A_{1}$, $\ldots$, $A_{n+1}$ are mutually complementary in
$V$, and subspaces $C$ and $A_{\left[n+1\right]-k}$ form a direct
sum for all $k$.

(ii) $B_{k}\leq A_{\left[n+1\right]-k}$ for all $k$.

(iii) $C\leq A_{k}+B_{k}$ for all $k$.

Then $\left(n+1\right)\mathrm{H}\left(C\right)\leq\mathrm{H}\left(B_{\left[n+1\right]}\right)$.
\end{lem}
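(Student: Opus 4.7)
My plan is to run the same projection-based strategy as in the proof of Lemma \ref{lema desigualdad condicional general caracteristica que divide n}, but with the roles of $B_k$ and $\pi_{[n+1]-k}(C)$ reversed. In the divides-$n$ case, hypothesis $B_k\le A_{[n+1]-k}\cap(A_k+C)$ gave an inclusion $B_k\le \pi_{[n+1]-k}(C)$ and forced an upper bound on $\mathrm{H}(B_{[n+1]})$. Here, the new hypotheses (ii) and (iii) should instead give the reverse inclusion $\pi_{[n+1]-k}(C)\le B_k$, which, combined with Lemma \ref{lemma 1:  distingue caracteristicas que dividen n y la que no lo divide} applied in the non-dividing characteristic case, yields the desired lower bound.

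The central step is to establish $\pi_{[n+1]-k}(C)\le B_k$ for every $k$. Fix $k$ and take $v\in C$. By hypothesis (iii), write $v=a+b$ with $a\in A_k$ and $b\in B_k$; by hypothesis (ii), $b\in A_{[n+1]-k}$, so this is a decomposition of $v$ with respect to the direct sum $V=A_k\oplus A_{[n+1]-k}$ coming from (i). But this decomposition is unique, so the $A_k$-component of $v$ is exactly $a$ and the $A_{[n+1]-k}$-component is exactly $b$. In other words $\pi_{[n+1]-k}(v)=b\in B_k$. Summing over $k$ I get
\[
\sum_{k=1}^{n+1}\pi_{[n+1]-k}(C)\ \le\ \sum_{k=1}^{n+1}B_k\ =\ B_{[n+1]}.
\]

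To finish, I invoke Lemma \ref{lemma 1:  distingue caracteristicas que dividen n y la que no lo divide}: since $A_1,\ldots,A_{n+1}$ are mutually complementary, $C$ is in direct sum with each $A_{[n+1]-k}$, and $\mathrm{char}(\mathbb{F})\nmid n$, that lemma gives $\mathrm{H}\bigl(\{\pi_{[n+1]-i}(C)\}_{i=1}^{n+1}\bigr)=(n+1)\mathrm{H}(C)$. Combining with the inclusion above and monotonicity of dimension yields
\[
(n+1)\mathrm{H}(C)\ =\ \mathrm{H}\!\left(\sum_{k=1}^{n+1}\pi_{[n+1]-k}(C)\right)\ \le\ \mathrm{H}(B_{[n+1]}),
\]
which is the conclusion.

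The only delicate point is the uniqueness argument that forces $\pi_{[n+1]-k}(v)=b$: one must be careful to invoke the direct-sum decomposition $V=A_k\oplus A_{[n+1]-k}$ (a consequence of (i)) rather than the finer decomposition into all the $A_i$'s, because hypothesis (iii) only controls the $A_k$-versus-rest splitting. Once that is in place, everything reduces mechanically to Lemma \ref{lemma 1:  distingue caracteristicas que dividen n y la que no lo divide}, so I do not expect any further obstacles.
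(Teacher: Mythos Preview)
Your proposal is correct and follows essentially the same route as the paper: both invoke Lemma~\ref{lemma 1:  distingue caracteristicas que dividen n y la que no lo divide} in the $\mathrm{char}(\mathbb{F})\nmid n$ case and then establish $\pi_{[n+1]-k}(C)\le B_k$ via uniqueness of the direct-sum decomposition, summing over $k$ to conclude. The only cosmetic difference is that you use the coarser splitting $V=A_k\oplus A_{[n+1]-k}$ directly, whereas the paper first expands $b_k$ along the finer decomposition into all the $A_j$'s before invoking uniqueness; your version is slightly cleaner but not a genuinely different argument.
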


\begin{proof}
By hypotheses (i) and condition of the characteristic we apply lemma
\ref{lemma 1:  distingue caracteristicas que dividen n y la que no lo divide}
to get
\begin{equation}
\text{H}\left(\pi_{\left[n+1\right]-i}\left(C\right):i\in\left[n+1\right]\right)=\left(n+1\right)\text{H}\left(C\right).\label{eq:lema para  desigualdades sobre los que no dividen a n}
\end{equation}
Furthermore, $\pi_{\left[n+1\right]-k}\left(C\right)\leq B_{k}$ for
all $k$. In effect, fixed $k\in\left[n+1\right]$ and let $v=\stackrel[i=1]{n+1}{\sum}v_{i}\in C$,
where $v_{i}\in A_{i}$. By hypothesis (iii), there exist $a_{k}\in A_{k}$
and $b_{k}\in B_{k}$ such that $v=a_{k}+b_{k}$. By hypothesis (ii),
there exist $a_{j}\in A_{j}$, for $j\in\left[n+1\right]-k$, such
that $b_{k}=\stackrel[j=1,j\neq k]{n+1}{\sum}a_{j}$. Then $v=\stackrel[i=1]{n+1}{\sum}v_{i}=a_{k}+\stackrel[j=1,j\neq k]{n+1}{\sum}a_{j}$
but $v$ has unique writing in terms of $A_{i}$, $i\in\left[n+1\right]$,
in particular, $a_{k}=v_{k}$. We get $\pi_{\left[n+1\right]-k}\left(v\right)=v-v_{k}=b_{k}\in B_{k}$.
In other words, $\pi_{\left[n+1\right]-k}\left(C\right)\leq B_{k}$.
Hence, $\stackrel[k=1]{n+1}{\sum}\pi_{\left[n+1\right]-k}\left(C\right)\leq\stackrel[k=1]{n+1}{\sum}B_{k}$.
Therefore, using equation (\ref{eq:lema para  desigualdades sobre los que no dividen a n})
we get, $\text{H}\left(B_{\left[n+1\right]}\right)\geq\left(n+1\right)\text{H}\left(C\right)\text{.}$
\end{proof}
Inequalities imply by lemmas \ref{lema desigualdad condicional general caracteristica que divide n}
and \ref{lema desigualdad condicional sobre caracteristica que no dividen n}
are conditional characteristic-dependent linear rank inequalities,
in the sense that they are true only for vector spaces with certain
relations of linear dependency. Theorems \ref{theorema fuerte caracterisitica rango lineal para conjunto finito}
and \ref{teorema fuerte desigualdad caracteristica rango lineal para cofinitos}
will use these inequalities to obtain characteristic-dependent linear
rank inequalities. The demonstrations consists of finding vector subspaces
of the original vector subspaces that satisfy the conditions of these
lemmas. Then, we find an upper bounds and a lower bounds of the inequalities
imply by these lemmas in terms of the original subspaces. To accomplish
this, we introduce the following construction: First, we build mutually
complementary subspaces $A_{1}'$,..., $A_{n+1}'$ in $A_{\left[n+1\right]}$
from $A_{1},...,A_{n+1}$: Define $A_{1}':=A_{1}$, and for $k=2$,
$\ldots$, $n+1$ denote by $A_{k}'$ a subspace of $A_{k}$ which
is a complementary subspace to $A_{\left[k-1\right]}$ in $A_{\left[k\right]}$.
Then $A_{1}'$, ..., $A_{n+1}'$ are mutually complementary and the
following equations hold:
\begin{equation}
\text{codim}_{A_{k}}\left(A_{k}'\right)=\text{I}\left(A_{\left[k-1\right]};A_{k}\right),\label{ecuaciones de codimension de los Ai primas}
\end{equation}
where $A_{0}=O$. Second, we built a subspace $\bar{C}$ of $C\cap A_{\left[n+1\right]}'$
such that $\bar{C}$ and $A_{\left[n+1\right]-k}'$ form a direct
sum for all $k$. Let $C^{\left(0\right)}:=C\cap A_{\left[n+1\right]}$.
Recursively, for $k=1$, $\ldots$, $n+1$ denote by  $C^{\left(k\right)}$
a subspace of $C^{\left(k-1\right)}$ which is a complementary subspace
to $A_{\left[n+1\right]-k}'$ in $C^{\left(k-1\right)}+A_{\left[n+1\right]-k}'$.
We denote $\bar{C}:=C^{\left(n+1\right)}$, this space satisfies the
required condition and the following equation:
\begin{equation}
\text{codim}_{C}\left(\bar{C}\right)\leq\text{H}\left(C\mid A_{\left[n+1\right]}\right)+\stackrel[i=1]{n+1}{\sum}\text{I}\left(A_{\left[n+1\right]-i};C\right)\text{.}\label{eq: codimension de C (raya horizontal arriba)}
\end{equation}
Summarizing, from $V$, $A_{1}$, $\ldots$, $A_{n+1}$ and $C$,
we built a tuple of vector subspaces
\begin{equation}
A_{1}',\ldots,A_{n+1}',\bar{C}\label{eq: construcci=0000F3n de espacios que son sumas directas}
\end{equation}
in which the sum of any members is a direct sum. We remark that this
tuple is not unique but in the proofs of the following two theorems
we will fix one of these.
\begin{thm}
\label{theorema fuerte caracterisitica rango lineal para conjunto finito}For
any $n\geq2$. Let $A_{1}$, $A_{2}$, $\ldots$, $A_{n+1}$, $B_{1}$,
$B_{2}$, $\ldots$, $B_{n+1}$, $C$ be subspaces of a finite-dimensional
vector space $V$ over a scalar field $\mathbb{F}$ whose field characteristic
divides $n$,
\[
\text{H}\left(B_{\left[n+1\right]}\right)\leq n\text{I}\left(A_{\left[n+1\right]};C\right)+\stackrel[i=1]{n+1}{\sum}\text{H}\left(B_{i}\mid A_{\left[n+1\right]-i}\right)+\stackrel[i=1]{n+1}{\sum}\text{H}\left(B_{i}\mid A_{i},C\right)+n\stackrel[i=2]{n}{\sum}\text{I}\left(A_{\left[i-1\right]};A_{i}\right)
\]
\[
+\left(n+1\right)\left[\text{I}\left(A_{\left[n\right]};A_{n+1}\right)+\text{H}\left(C\mid A_{\left[n+1\right]}\right)+\stackrel[i=1]{n+1}{\sum}\text{I}\left(A_{\left[n+1\right]-i};C\right)\right]\text{.}
\]
\end{thm}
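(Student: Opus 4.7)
My plan is to reduce the inequality to Lemma \ref{lema desigualdad condicional general caracteristica que divide n} by passing to auxiliary subspaces $A_{k}'$, $\bar{C}$, $B_{k}'$ that satisfy its hypotheses, and to track the error introduced by passing back to the original subspaces with Lemmas \ref{lemma basico 1desigualdad condicional rango lineal basica general} and \ref{lema basico 2 }.

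First I would invoke the construction of equations (\ref{ecuaciones de codimension de los Ai primas}) and (\ref{eq: codimension de C (raya horizontal arriba)}) to fix mutually complementary subspaces $A_{1}' \leq A_{1}, \ldots, A_{n+1}' \leq A_{n+1}$ inside $A_{[n+1]}$ together with $\bar{C} \leq C \cap A_{[n+1]}$, so that hypothesis (i) of Lemma \ref{lema desigualdad condicional general caracteristica que divide n} is automatic. Then I define
\[
B_{k}' := B_{k} \cap A_{[n+1]-k}' \cap (A_{k}' + \bar{C}),
\]
which trivially satisfies hypothesis (ii). Applying Lemma \ref{lema desigualdad condicional general caracteristica que divide n} (available because $\mathrm{char}(\mathbb{F}) \mid n$) immediately gives $\text{H}(B_{[n+1]}') \leq n\,\text{H}(\bar{C}) \leq n\,\text{I}(A_{[n+1]}; C)$, which is the leading term on the right-hand side.

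Next I write $\text{H}(B_{[n+1]}) = \text{H}(B_{[n+1]}') + \text{codim}_{B_{[n+1]}}(B_{[n+1]}')$ and bound the latter codimension by $\sum_{k} \text{codim}_{B_{k}}(B_{k}')$ using Lemma \ref{lemma basico 1desigualdad condicional rango lineal basica general}. The task reduces to estimating each $\text{codim}_{B_{k}}(B_{k}')$, which I would do by peeling off the three defining intersections one at a time along the chain
\[
B_{k} \supseteq B_{k} \cap A_{[n+1]-k} \supseteq B_{k} \cap A_{[n+1]-k}' \supseteq B_{k} \cap A_{[n+1]-k}' \cap (A_{k}+C) \supseteq B_{k}',
\]
bounding the successive steps by Lemma \ref{lema basico 2 } and splitting $\text{codim}_{A_{k}+C}(A_{k}' + \bar{C})$ into $\text{codim}_{A_{k}}(A_{k}') + \text{codim}_{C}(\bar{C})$ by Lemma \ref{lemma basico 1desigualdad condicional rango lineal basica general}. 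This yields
\[
\text{codim}_{B_{k}}(B_{k}') \leq \text{H}(B_{k} \mid A_{[n+1]-k}) + \text{H}(B_{k} \mid A_{k}, C) + \text{codim}_{A_{[n+1]-k}}(A_{[n+1]-k}') + \text{codim}_{A_{k}}(A_{k}') + \text{codim}_{C}(\bar{C}),
\]
after which the three remaining codimensions are controlled by the bounds already recorded in (\ref{ecuaciones de codimension de los Ai primas}) and (\ref{eq: codimension de C (raya horizontal arriba)}).

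The main obstacle is the bookkeeping when summing these bounds over $k$: a careless count attaches the coefficient $(n+1)$ to every $\text{I}(A_{[i-1]}; A_{i})$, which is weaker than the stated $n\sum_{i=2}^{n}\text{I}(A_{[i-1]};A_{i}) + (n+1)\,\text{I}(A_{[n]}; A_{n+1})$. The saving arises from the property $A_{[k]}' = A_{[k]}$ built into the construction; in particular $A_{[n]}' = A_{[n]}$, so the $k = n+1$ contribution to $\sum_{k} \text{codim}_{A_{[n+1]-k}}(A_{[n+1]-k}')$ drops out entirely and removes one full copy of $\sum_{i=2}^{n}\text{I}(A_{[i-1]}; A_{i})$ from the total. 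Once this observation is in place, each remaining term matches exactly the corresponding summand on the right-hand side of the theorem.
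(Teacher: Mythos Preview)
Your proposal is correct and follows essentially the same route as the paper: you build the auxiliary tuple $(A_k',\bar C)$ from (\ref{ecuaciones de codimension de los Ai primas})--(\ref{eq: codimension de C (raya horizontal arriba)}), define $B_k'=B_k\cap A_{[n+1]-k}'\cap(A_k'+\bar C)$, apply Lemma~\ref{lema desigualdad condicional general caracteristica que divide n}, and then bound $\mathrm{codim}_{B_{[n+1]}}B_{[n+1]}'$ via Lemmas~\ref{lemma basico 1desigualdad condicional rango lineal basica general} and~\ref{lema basico 2 }, with the key bookkeeping observation that $A_{[n]}'=A_{[n]}$ kills the $k=n+1$ contribution to $\sum_k\mathrm{codim}_{A_{[n+1]-k}}A_{[n+1]-k}'$. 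The only cosmetic difference is that the paper splits $\mathrm{codim}_{B_k}B_k'$ as $\mathrm{codim}_{B_k}(B_k\cap A_{[n+1]-k}')+\mathrm{codim}_{B_k}(B_k\cap(A_k'+\bar C))$ rather than walking down your four-term chain, but the resulting bound on each $\mathrm{codim}_{B_k}B_k'$ is identical.
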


\begin{proof}
The tuple (\ref{eq: construcci=0000F3n de espacios que son sumas directas})
obtained from the given vector spaces satisfies the condition (i)
of the lemma in the space $V'=A_{\left[n+1\right]}$. To meet condition
(ii), we define for \textbf{$k=1$ }to\textbf{ $k=n+1$, }$B_{k}':=B_{k}\cap\left(A_{\left[n+1\right]-k}'\right)\cap\left(A_{k}'+\bar{C}\right)$.
Subspaces $A_{1}'$, $...$, $A_{n+1}'$, $B_{1}'$, $...$, $B_{n+1}'$,
$\bar{C}$ of $V'$ satisfy all hypothesis of lemma \ref{lema desigualdad condicional general caracteristica que divide n}
over a scalar field $\mathbb{F}$ whose field characteristic divides
$n$, we get
\begin{equation}
\text{H}\left(B_{\left[n+1\right]}'\right)\leq n\text{H}\left(\bar{C}\right)\text{.}\label{eq:-17}
\end{equation}
An upper bound of this inequality (\ref{eq:-17}) is given by
\begin{equation}
\text{H}\left(\bar{C}\right)\leq\text{I}\left(A_{\left[n+1\right]};C\right)\text{\,\,\,\,[from \ensuremath{\bar{C}\leq C^{\left(0\right)}}]}\text{.}\label{eq:-18}
\end{equation}
We look for an upper bound on $\text{codim}_{B_{\left[n+1\right]}}B_{\left[n+1\right]}'$
in order to get a lower bound on $\text{H}\left(B_{\left[n+1\right]}'\right)$.
\[
\text{codim}_{B_{\left[n+1\right]}}B_{\left[n+1\right]}'\leq\stackrel[i=1]{n+1}{\sum}\text{codim}_{B_{i}}B_{i}'\text{\,\,\,\,[from lemma \ref{lemma basico 1desigualdad condicional rango lineal basica general}].}
\]
For $k\in\left[n\right]$, we have
\[
\text{codim}_{B_{k}}B_{k}'\leq\text{H}\left(B_{k}\mid A_{\left[n+1\right]-k}'\right)+\text{H}\left(B_{k}\mid A_{k}',C'\right)
\]
\[
=\text{codim}_{B_{k}}\left(A_{\left[n+1\right]-k}'\cap B_{k}\right)+\text{codim}_{B_{k}}\left(\left[A_{k}'+C'\right]\cap B_{k}\right)
\]
\[
=\text{codim}_{B_{k}}\left(A_{\left[n+1\right]-k}\cap B_{k}\right)+\text{codim}_{B_{k}}\left(\left[A_{k}+C\right]\cap B_{k}\right)
\]
\[
+\text{codim}_{A_{\left[n+1\right]-k}\cap B_{k}}\left(A_{\left[n+1\right]-k}'\cap B_{k}\right)+\text{codim}_{\left[A_{k}+C\right]\cap B_{k}}\left(\left[A_{k}'+C'\right]\cap B_{k}\right)
\]
\[
\leq\text{codim}_{B_{k}}\left(A_{\left[n+1\right]-k}\cap B_{k}\right)+\text{codim}_{B_{k}}\left(\left[A_{k}+C\right]\cap B_{k}\right)+\text{codim}_{A_{\left[n+1\right]-k}}\left(A_{\left[n+1\right]-k}'\right)
\]
\[
+\text{codim}_{A_{k}+C}\left(A_{k}'+C'\right)\,\,\,\,\,\text{\,\,\,\,[from lemma \ref{lema basico 2 }].}
\]
\[
=\text{H}\left(B_{k}\mid A_{\left[n+1\right]-k}\right)+\text{H}\left(B_{k}\mid A_{k},C\right)+\text{codim}_{A_{\left[n+1\right]-k}}\left(A_{\left[n+1\right]-k}'\right)+\text{codim}_{A_{k}+C}\left(A_{k}'+C'\right)
\]
\[
\leq\text{H}\left(B_{k}\mid A_{\left[n+1\right]-k}\right)+\text{H}\left(B_{k}\mid A_{k},C\right)+\stackrel[i=1]{n+1}{\sum}\text{codim}_{A_{i}}\left(A_{i}^{'}\right)+\text{codim}_{C}\left(\bar{C}\right)\text{\,\,\,\,[from lemma \ref{lemma basico 1desigualdad condicional rango lineal basica general}].}
\]
\[
\leq\text{H}\left(B_{k}\mid A_{\left[n+1\right]-k}\right)+\text{H}\left(B_{k}\mid A_{k},C\right)+\stackrel[i=2]{n+1}{\sum}\text{I}\left(A_{\left[i-1\right]};A_{i}\right)+\text{H}\left(C\mid A_{\left[n+1\right]}\right)+\stackrel[i=1]{n+1}{\sum}\text{I}\left(A_{\left[n+1\right]-i};C\right)\text{[from \ref{ecuaciones de codimension de los Ai primas}\text{]}.}
\]
For $k=n+1$, noting that $\text{codim}_{A_{\left[n\right]}}A_{\left[n\right]}'=0$,
we get
\[
\text{codim}_{B_{n+1}}B_{n+1}'\leq\text{H}\left(B_{n+1}\mid A_{\left[n\right]}\right)+\text{H}\left(B_{n+1}\mid A_{n+1},C\right)+\text{I}\left(A_{\left[n\right]};A_{n+1}\right)+\text{H}\left(C\mid A_{\left[n+1\right]}\right)+\stackrel[i=1]{n+1}{\sum}\text{I}\left(A_{\left[n+1\right]-i};C\right)\text{.}
\]
Then, we find that
\[
\text{codim}_{B_{\left[n+1\right]}}B_{\left[n+1\right]}'\leq\stackrel[i=1]{n+1}{\sum}\text{H}\left(B_{i}\mid A_{\left[n+1\right]-i}\right)+\stackrel[i=1]{n+1}{\sum}\text{H}\left(B_{i}\mid A_{i},C\right)+\stackrel[i=1]{n}{\sum}\text{I}\left(A_{\left[i\right]};A_{\left[n+1\right]-\left[i\right]}\right)
\]
\begin{equation}
+\stackrel[i=2]{n+1}{\sum}\text{I}\left(A_{\left[i-1\right]};A_{i}\right)+\left(n+1\right)\left[\text{H}\left(C\mid A_{\left[n+1\right]}\right)+\stackrel[i=1]{n+1}{\sum}\text{I}\left(A_{\left[n+1\right]-i};C\right)\right]\text{.}\label{eq: teorema principal en caracteristica que divide a n conta inferior sobre los B primas}
\end{equation}
From (\ref{eq:-17}) , (\ref{eq:-18}) and (\ref{eq: teorema principal en caracteristica que divide a n conta inferior sobre los B primas}),
we get the desired inequality. The inequality does not hold in general
over vector spaces whose characteristic does not divide $n$. A counter
example would be: In $V=\text{GF}\left(p\right)^{n+1}$, $p\nmid n$,
take the vector space $A_{1}$, $\ldots$, $A_{n+1}$, $B_{1}$, $\ldots$,
$B_{n+1}$ and $C$ generated by the columns of the matrix $L_{n}$.
Then, all information measures are zero but $\text{H}\left(B_{\left[n+1\right]}\right)=n+1$
and $\text{I}\left(A_{\left[n+1\right]};C\right)=1$. We get $n\geq n+1$
which is a contradiction.
\end{proof}
\begin{prop}
\label{prop:caso cuando la desigualdad fano es valida para todo}If
the dimension of vector space $V$ is at most $n$, then inequality
implicated by Theorem \ref{theorema fuerte caracterisitica rango lineal para conjunto finito}
is true over any field.
\end{prop}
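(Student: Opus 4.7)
The plan is to exploit the crude bound $\text{H}(B_{[n+1]}) \le \dim V \le n$ on the left-hand side. Every information measure appearing on the right-hand side is multiplied by a coefficient equal to $n$ or $n+1$, so as soon as any one of them is at least $1$ that single term already dominates $n$ and the inequality holds trivially. Thus the argument splits into a sequence of easy cases that successively drive us to a clean residual configuration in which everything is a direct sum.

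\textbf{Case reductions.} I would start by writing $\text{H}(C) = \text{I}(A_{[n+1]};C) + \text{H}(C \mid A_{[n+1]})$. If $\text{I}(A_{[n+1]};C) \ge 1$, the term $n\,\text{I}(A_{[n+1]};C)$ already gives $\ge n$; and if $\text{H}(C \mid A_{[n+1]}) \ge 1$, the term $(n+1)\,\text{H}(C \mid A_{[n+1]})$ gives $\ge n+1$. Together these two subcases cover $C \ne O$, so we may assume $C = O$, which kills every $C$-containing summand and reduces $\text{H}(B_i \mid A_i, C)$ to $\text{H}(B_i \mid A_i)$. Next I would dispose of the $A_i$-interaction terms: if $\text{I}(A_{[i-1]};A_i) \ge 1$ for some $i \in \{2,\ldots,n\}$, the coefficient $n$ again absorbs the left-hand side, and if $\text{I}(A_{[n]};A_{n+1}) \ge 1$, the coefficient $n+1$ does the same. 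We may therefore further assume $A_1, \ldots, A_{n+1}$ are mutually complementary, which in particular gives $A_i \cap A_{[n+1]-i} = O$ for every $i$.

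\textbf{The residual inequality.} After these reductions, it remains to prove
\[
\text{H}(B_{[n+1]}) \le \sum_{i=1}^{n+1} \text{H}(B_i \mid A_{[n+1]-i}) + \sum_{i=1}^{n+1} \text{H}(B_i \mid A_i).
\]
For each $i$ I would apply the modular (dimension) identity to the pair $B_i \cap A_i$ and $B_i \cap A_{[n+1]-i}$. Since $A_i \cap A_{[n+1]-i} = O$ these two subspaces meet only in $O$, and their sum sits inside $B_i$, so
\[
\text{H}(B_i \cap A_i) + \text{H}(B_i \cap A_{[n+1]-i}) \le \text{H}(B_i).
\]
Rewriting each conditional as a codimension gives $\text{H}(B_i \mid A_i) + \text{H}(B_i \mid A_{[n+1]-i}) \ge \text{H}(B_i)$. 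Summing over $i$ and invoking subadditivity $\text{H}(B_{[n+1]}) \le \sum_i \text{H}(B_i)$ closes the argument.

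\textbf{Main obstacle.} There is no genuinely hard step; the content of the proposition is essentially bookkeeping about which terms can safely be ignored in small dimension. The one point that merits care is verifying that the coefficients $n$ and $n+1$ in Theorem \ref{theorema fuerte caracterisitica rango lineal para conjunto finito} are both large enough to absorb the bound $\text{H}(B_{[n+1]}) \le n$ in every preliminary case, so that the cascade of reductions genuinely lands in the direct-sum configuration where the modular identity finishes the job.
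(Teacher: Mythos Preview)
Your proof is correct and follows essentially the same contradiction-by-dimension strategy as the paper: both arguments use $\text{H}(B_{[n+1]})\le n$ to force every term with coefficient $n$ or $n+1$ to vanish, landing in the configuration $C=O$ with $A_1,\ldots,A_{n+1}$ a direct sum. The only substantive difference is in the residual step: the paper counts how many of the $2(n+1)$ summands $\text{H}(B_i\mid A_{[n+1]-i})$, $\text{H}(B_i\mid A_i)$ must vanish (at least $n+3$, since the right side is $\le n-1$) and then observes that when one of the pair is zero the other equals $\text{H}(B_i)$, eventually reaching $\text{H}(B_S)>\sum_{i\in S}\text{H}(B_i)$. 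Your treatment is cleaner: the modular inequality $\text{H}(B_i\cap A_i)+\text{H}(B_i\cap A_{[n+1]-i})\le \text{H}(B_i)$ gives $\text{H}(B_i\mid A_i)+\text{H}(B_i\mid A_{[n+1]-i})\ge \text{H}(B_i)$ directly, without any case analysis on which summands are zero or any appeal to integrality. One small terminological slip: ``mutually complementary'' in this paper also requires spanning $V$, whereas you only establish (and only need) that the sum $A_1+\cdots+A_{n+1}$ is direct.
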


\begin{proof}
We suposse that there exist vector subspaces $A_{1}$, $A_{2}$, $\ldots$,
$A_{n+1}$, $B_{1}$, $B_{2}$, $\ldots$, $B_{n+1}$, $C$ of a vector
space $V$ of dimension at most $n$ that do not hold the desired
inequality i.e. 
\[
\text{H}\left(B_{\left[n+1\right]}\right)>n\text{I}\left(A_{\left[n+1\right]};C\right)+\stackrel[i=1]{n+1}{\sum}\text{H}\left(B_{i}\mid A_{\left[n+1\right]-i}\right)+\stackrel[i=1]{n+1}{\sum}\text{H}\left(B_{i}\mid A_{i},C\right)+n\stackrel[i=2]{n}{\sum}\text{I}\left(A_{\left[i-1\right]};A_{i}\right)
\]
\[
+\left(n+1\right)\left[\text{I}\left(A_{\left[n\right]};A_{n+1}\right)+\text{H}\left(C\mid A_{\left[n+1\right]}\right)+\stackrel[i=1]{n+1}{\sum}\text{I}\left(A_{\left[n+1\right]-i};C\right)\right],
\]
and find a contradiction. Since $\text{H}\left(B_{\left[n+1\right]}\right)\leq n$,
the right side of the inequality is at most $n-1$. Hence, $\text{I}\left(A_{\left[i-1\right]};A_{i}\right)=\text{I}\left(A_{\left[n+1\right]};C\right)=\text{H}\left(C\mid A_{\left[n+1\right]}\right)=0$
for all $i$. So, we get $\bigoplus A_{i}$ is a direct sum and $C=O$
are the zero space. Then, the inequality becames $\text{H}\left(B_{\left[n+1\right]}\right)>\stackrel[i=1]{n+1}{\sum}\left[\text{H}\left(B_{i}\mid A_{\left[n+1\right]-i}\right)+\text{H}\left(B_{i}\mid A_{i}\right)\right]$.
We note that if $\text{H}\left(B_{i}\mid A_{\left[n+1\right]-i}\right)=0$
then $\text{H}\left(B_{i}\mid A_{i}\right)=\text{H}\left(B_{i}\right)$;
if $\text{H}\left(B_{i}\mid A_{i}\right)=0$ then $\text{H}\left(B_{i}\mid A_{\left[n+1\right]-i}\right)=\text{H}\left(B_{i}\right)$,
and at least $n+3$ summands are zeros in the right side of the inequality.
With this in mind, we get an inequality of the form $\text{H}\left(B_{S}\right)>\underset{i\in S}{\sum}\text{H}\left(B_{i}\right)$,
where $B_{i}\neq O$ for $i\in S$ which is a contradiction.
\end{proof}
We want to remark that the characteristic-dependent linear rank inequalities
in \cite{4}, which is valid for fields whose characteristic is different
from two, has an error which is produced by a failure in determining
an upper bound on the rank of a vector space in the demonstration
of \cite[ Theorem 6.2]{4}. A counter example for that inequality
would be: Let $V_{100}=\left\langle \left(\begin{array}{c}
1\\
0\\
0
\end{array}\right)\right\rangle $, $V_{010}=\left\langle \left(\begin{array}{c}
0\\
1\\
0
\end{array}\right)\right\rangle $, $V_{001}=\left\langle \left(\begin{array}{c}
0\\
0\\
1
\end{array}\right)\right\rangle $, $V_{011}=V_{101}=V_{110}=O$ and $V_{111}=\left\langle \left(\begin{array}{c}
1\\
1\\
1
\end{array}\right)\right\rangle $ be vector subspace of $\text{GF}\left(p\right)^{3}$ with $p\neq2$.
Then we get $-3\geq0$ which is a contradiction. So, in the case $n=2$,
the following inequality corrects this error.
\begin{thm}
\label{teorema fuerte desigualdad caracteristica rango lineal para cofinitos}For
any $n\geq2$. Let $A_{1}$, $A_{2}$, $\ldots$, $A_{n+1}$, $B_{1}$,
$B_{2}$, $\ldots$, $B_{n+1}$, $C$ be subspaces of a finite-dimensional
vector space $V$ over a scalar field $\mathbb{F}$ whose field characteristic
does not divide $n$,
\[
\mathrm{H}\left(C\right)\leq\frac{1}{n+1}\mathrm{H}\left(B_{\left[n+1\right]}\right)+\mathrm{H}\left(C\mid A_{\left[n+1\right]}\right)+\stackrel[i=1]{n+1}{\sum}\mathrm{I}\left(A_{\left[n+1\right]-i};C\right)+\stackrel[i=1]{n+1}{\sum}\mathrm{H}\left(C\mid A_{i},B_{i}\right)
\]
\[
+n\stackrel[i=2]{n}{\sum}\mathrm{I}\left(A_{\left[i-1\right]};A_{i}\right)+\left(n+1\right)\mathrm{I}\left(A_{\left[n\right]};A_{n+1}\right)+\stackrel[i=1]{n+1}{\sum}\mathrm{H}\left(B_{i}\mid A_{\left[n+1\right]-i}\right)\text{.}
\]
\end{thm}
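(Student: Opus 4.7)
The plan is to mirror the structure of the proof of Theorem~\ref{theorema fuerte caracterisitica rango lineal para conjunto finito}, but applying Lemma~\ref{lema desigualdad condicional sobre caracteristica que no dividen n} in place of Lemma~\ref{lema desigualdad condicional general caracteristica que divide n}. The essential difference is that the relevant lemma now bounds $\mathrm{H}(C)$ from above in terms of $\mathrm{H}(B_{[n+1]})$ rather than the reverse, so I must first produce a subspace $\bar{C}^{*}$ of $C$ satisfying all three hypotheses of Lemma~\ref{lema desigualdad condicional sobre caracteristica que no dividen n} and then pay for its codimension in $C$.

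First I reuse the construction preceding Theorem~\ref{theorema fuerte caracterisitica rango lineal para conjunto finito} to build the mutually complementary subspaces $A_{1}', \ldots, A_{n+1}'$ of $V' := A_{[n+1]}$ with $A_{k}' \leq A_{k}$ and the subspace $\bar{C} \leq C \cap V'$ that is in direct sum with every $A_{[n+1]-k}'$ and whose codimension in $C$ satisfies \eqref{eq: codimension de C (raya horizontal arriba)}. I then set $B_{k}' := B_{k} \cap A_{[n+1]-k}'$, which makes hypothesis (ii) of Lemma~\ref{lema desigualdad condicional sobre caracteristica que no dividen n} automatic and guarantees $\mathrm{H}(B'_{[n+1]}) \leq \mathrm{H}(B_{[n+1]})$. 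To enforce hypothesis (iii) I further restrict by setting $\bar{C}^{*} := \bar{C} \cap \bigcap_{k=1}^{n+1}(A_{k}' + B_{k}')$; hypothesis (i) is inherited from $\bar{C}$.

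Applying Lemma~\ref{lema desigualdad condicional sobre caracteristica que no dividen n} in $V'$ to the tuple $(A_{k}', B_{k}', \bar{C}^{*})$ gives $(n+1)\mathrm{H}(\bar{C}^{*}) \leq \mathrm{H}(B'_{[n+1]}) \leq \mathrm{H}(B_{[n+1]})$, so
\[
\mathrm{H}(C) \leq \tfrac{1}{n+1}\mathrm{H}(B_{[n+1]}) + \mathrm{codim}_{C}(\bar{C}) + \mathrm{codim}_{\bar{C}}(\bar{C}^{*}).
\]
The first codimension is handled by \eqref{eq: codimension de C (raya horizontal arriba)}. For the second, chaining the intersections yields $\mathrm{codim}_{\bar{C}}(\bar{C}^{*}) \leq \sum_{k}\mathrm{H}(\bar{C} \mid A_{k}', B_{k}')$, and each summand I split as $\mathrm{H}(\bar{C} \mid A_{k}', B_{k}') \leq \mathrm{H}(\bar{C} \mid A_{k}', B_{k}) + \mathrm{H}(B_{k} \mid A_{k}', B_{k}')$. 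Using the identity $\mathrm{H}(A_{k} \mid A_{k}') = \mathrm{I}(A_{[k-1]}; A_{k})$ together with Lemmas~\ref{lemma basico 1desigualdad condicional rango lineal basica general} and \ref{lema basico 2 }, the first piece is bounded by $\mathrm{H}(C \mid A_{k}, B_{k}) + \mathrm{I}(A_{[k-1]}; A_{k})$ and the second by $\mathrm{H}(B_{k} \mid A_{[n+1]-k}) + \mathrm{codim}_{A_{[n+1]-k}}(A_{[n+1]-k}')$.

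The main bookkeeping obstacle is the double sum $\sum_{k}\mathrm{codim}_{A_{[n+1]-k}}(A_{[n+1]-k}')$. For each $k \in [n]$ it is bounded by $\sum_{i \neq k,\, i \geq 2}\mathrm{I}(A_{[i-1]}; A_{i})$ via Lemma~\ref{lemma basico 1desigualdad condicional rango lineal basica general}, while for $k = n+1$ the identity $A_{[n]}' = A_{[n]}$ forces it to vanish; this asymmetry is what produces the split coefficients $n\sum_{i=2}^{n}\mathrm{I}(A_{[i-1]}; A_{i})$ and $(n+1)\mathrm{I}(A_{[n]}; A_{n+1})$ in the statement after combining with the $\sum_{k=2}^{n+1}\mathrm{I}(A_{[k-1]}; A_{k})$ contributed by the first piece of the split. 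The term $\sum_{i}\mathrm{H}(B_{i} \mid A_{[n+1]-i})$ appears directly from the $\mathrm{H}(B_{k} \mid A_{[n+1]-k})$ part of the second piece, completing the inequality.
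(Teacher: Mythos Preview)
Your proof is correct and follows essentially the same route as the paper: you build the same auxiliary tuple $(A_k',B_k',\bar{C}^{*})$ (the paper calls $\bar{C}^{*}$ by $\hat{C}$), apply Lemma~\ref{lema desigualdad condicional sobre caracteristica que no dividen n}, and then pay for $\mathrm{codim}_{C}(\bar{C})+\mathrm{codim}_{\bar{C}}(\bar{C}^{*})$ via the same codimension bookkeeping. The only cosmetic difference is in how you bound each $\mathrm{H}(\bar{C}\mid A_k',B_k')$: the paper first passes from $\bar{C}$ to $C$ and then uses Lemma~\ref{lema basico 2 } with the chain $A_k'+B_k'\leq A_k+B_k$ to get $\mathrm{H}(C\mid A_k,B_k)+\mathrm{codim}_{A_k}(A_k')+\mathrm{codim}_{B_k}(B_k')$, whereas you use the submodular split $\mathrm{H}(\bar{C}\mid A_k',B_k')\le \mathrm{H}(\bar{C}\mid A_k',B_k)+\mathrm{H}(B_k\mid A_k',B_k')$ and then bound each piece separately, arriving at the same two codimension terms and hence the same final count.
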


\begin{proof}
The tuple (\ref{eq: construcci=0000F3n de espacios que son sumas directas})
obtained from the given vector spaces satisfies the condition (i)
of the lemma \ref{lema desigualdad condicional sobre caracteristica que no dividen n}
in the space $V'=A_{\left[n+1\right]}$. To meet condition (ii), we
define for \textbf{$k=1$ }to\textbf{ $k=n+1$}, $B_{k}':=B_{k}\cap\left(A_{\left[n+1\right]-k}'\right)$.
We get
\begin{equation}
\text{codim}_{B_{k}}\left(B_{k}'\right)\leq\text{H}\left(B_{k}\mid A_{\left[n+1\right]-k}\right)+\stackrel[i=2,i\neq k]{n+1}{\sum}\text{I}\left(A_{\left[i-1\right]};A_{i}\right),\text{ \ensuremath{k\in\left[n\right]}}\label{eq:-3-1}
\end{equation}
\begin{equation}
\text{codim}_{B_{n+1}}\left(B_{n+1}'\right)=\text{H}\left(B_{n+1}\mid A_{\left[n\right]}\right).\label{eq:-1-1}
\end{equation}
By last, to meet condition (iii), we obtain a new subspace of $\bar{C}$
that also satisfies (i) by following way. Let $\bar{C}^{\left(0\right)}:=\bar{C}$,
for $k=1$ to $k=n+1$, denote by $\bar{C}^{\left(k\right)}:=\bar{C}^{\left(k-1\right)}\cap\left(A_{k}'+B_{k}'\right)$.
Define $\hat{C}=\bar{C}^{\left(n+1\right)}$. The subspaces $A_{1}'$,
$...$, $A_{n+1}'$, $B_{1}'$, $...$, $B_{n+1}'$, $\hat{C}$ of
$V'$ satisfy all hypothesis of lemma \ref{lema desigualdad condicional sobre caracteristica que no dividen n},
we get
\begin{equation}
\left(n+1\right)\text{H}\left(\hat{C}\right)\leq\text{H}\left(B_{\left[n+1\right]}'\right)\label{eq:Caso distinto de 3 desigualdad condicional basica-1}
\end{equation}
We have to get an upper bound and a lower bound using (\ref{eq:Caso distinto de 3 desigualdad condicional basica-1}).
Obviously,
\begin{equation}
\text{H}\left(B_{\left[n+1\right]}'\right)\leq\text{H}\left(B_{\left[n+1\right]}\right).\label{eq:-20}
\end{equation}
We look for an upper bound on $\text{codim}_{C}\hat{C}$ in order
to get a lower bound on $\text{H}\left(\hat{C}\right)$,
\[
\text{codim}_{C}\hat{C}=\text{codim}_{C}\bar{C}+\text{codim}_{\bar{C}}\hat{C}
\]
\[
\,\,\,\,\,\,\,\,\,\,\,\,\,\,\,\,\,\,\,\,\,\,\,\,\,\,\,\,\leq\text{H}\left(C\mid A_{\left[n+1\right]}\right)+\stackrel[i=1]{n+1}{\sum}\text{I}\left(A_{\left[n+1\right]-i};C\right)+\stackrel[k=1]{n+1}{\sum}\text{H}\left(\bar{C}\mid A_{k}'+B_{k}'\right)\text{ [from (\ref{eq: codimension de C (raya horizontal arriba)}) and definition of \ensuremath{\hat{C}}]}
\]
\[
=\text{H}\left(C\mid A_{\left[n+1\right]}\right)+\stackrel[i=1]{n+1}{\sum}\text{I}\left(A_{\left[n+1\right]-i};C\right)+\stackrel[k=1]{n+1}{\sum}\text{codim}_{C}\left(C\cap\left[A_{k}'+B_{k}'\right]\right)
\]
\[
=\text{H}\left(C\mid A_{\left[n+1\right]}\right)+\stackrel[i=1]{n+1}{\sum}\text{I}\left(A_{\left[n+1\right]-i};C\right)+\stackrel[k=1]{n+1}{\sum}\text{codim}_{C}\left(C\cap\left[A_{k}+B_{k}\right]\right)
\]
\[
+\stackrel[k=1]{n+1}{\sum}\text{codim}_{\left(C\cap\left[A_{k}+B_{k}\right]\right)}\left(C\cap\left[A_{k}'+B_{k}'\right]\right)
\]
\[
\leq\text{H}\left(C\mid A_{\left[n+1\right]}\right)+\stackrel[i=1]{n+1}{\sum}\text{I}\left(A_{\left[n+1\right]-i};C\right)+\stackrel[k=1]{n+1}{\sum}\text{H}\left(C\mid A_{k},B_{k}\right)
\]
\[
+\stackrel[k=1]{n+1}{\sum}\text{codim}_{\left(A_{k}+B_{k}\right)}\left(A_{k}'+B_{k}'\right)\text{\,\,\,\,\,\,\,\,\,\,\,[from lemma \ref{lema basico 2 } and (\ref{lemma basico 1desigualdad condicional rango lineal basica general})]}
\]
\[
\leq\text{H}\left(C\mid A_{\left[n+1\right]}\right)+\stackrel[i=1]{n+1}{\sum}\text{I}\left(A_{\left[n+1\right]-i};C\right)+\stackrel[k=1]{n+1}{\sum}\text{H}\left(C\mid A_{k},B_{k}\right)
\]
\[
+\stackrel[i=1]{n+1}{\sum}\text{codim}_{A_{i}}\left(A_{i}'\right)+\stackrel[i=1]{n+1}{\sum}\text{codim}_{B_{i}}\left(B_{i}'\right)
\]
\[
\leq\text{H}\left(C\mid A_{\left[n+1\right]}\right)+\stackrel[i=1]{n+1}{\sum}\text{I}\left(A_{\left[n+1\right]-i};C\right)+\stackrel[k=1]{n+1}{\sum}\text{H}\left(C\mid A_{k},B_{k}\right)
\]
\[
+\stackrel[i=1]{n+1}{\sum}\text{I}\left(A_{\left[i-1\right]};A_{i}\right)+\stackrel[i=1]{n}{\sum}\stackrel[j=1,j\neq i]{n+1}{\sum}\text{I}\left(A_{\left[j-1\right]};A_{j}\right)+\stackrel[i=1]{n+1}{\sum}\text{H}\left(B_{i}\mid A_{\left[n+1\right]-i}\right)\,\,\,\,\text{[from (\ref{eq:-3-1})]}
\]
\[
\leq\text{H}\left(C\mid A_{\left[n+1\right]}\right)+\stackrel[i=1]{n+1}{\sum}\text{I}\left(A_{\left[n+1\right]-i};C\right)+\stackrel[k=1]{n+1}{\sum}\text{H}\left(C\mid A_{k},B_{k}\right)
\]
\[
+n\stackrel[i=2]{n}{\sum}\text{I}\left(A_{\left[i-1\right]};A_{i}\right)+\left(n+1\right)\text{I}\left(A_{\left[n\right]};A_{n+1}\right)+\stackrel[i=1]{n+1}{\sum}\text{H}\left(B_{i}\mid A_{\left[n+1\right]-i}\right)\,\,\,\,\text{[from (\ref{eq:-3-1})]}
\]

From (\ref{eq:Caso distinto de 3 desigualdad condicional basica-1})
, (\ref{eq:-20}) and last inequality, we get the desired inequality.
The inequality does not hold in general over vector spaces whose characteristic
divides $n$. A counter example would be: In $V=\text{GF}\left(p\right)^{n+1}$,
$p\nmid n$, take the vector space $A_{1}$, $\ldots$, $A_{n+1}$,
$B_{1}$, $\ldots$, $B_{n+1}$ and $C$ generated by the columns
of the matrix $L_{n}$.. Then, all information measures are zero but
$\text{H}\left(B_{\left[n+1\right]}\right)=n$ and $\text{H}\left(C\right)=1$.
We get $n+1\leq n$ which is a contradiction.
\end{proof}
\begin{prop}
\label{prop:caso donde la desi no fano es valida para todo}If the
dimension of vector space $V$ is at most $n$, then inequality implicated
by Theorem \ref{teorema fuerte desigualdad caracteristica rango lineal para cofinitos}
is true over any field.
\end{prop}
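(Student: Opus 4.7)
The plan is to proceed by contradiction, closely following the template of Proposition~\ref{prop:caso cuando la desigualdad fano es valida para todo}. Assume a counterexample: subspaces $A_1,\ldots,A_{n+1},B_1,\ldots,B_{n+1},C$ of $V$ with $\mathrm{H}(C)$ strictly greater than the right-hand side of Theorem~\ref{teorema fuerte desigualdad caracteristica rango lineal para cofinitos}. Every summand on the right is non-negative and $\mathrm{H}(C)\le\dim V\le n$, so each summand must individually be smaller than $n$.

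The large coefficients on the $A_i$ mutual-information terms do the first piece of work. If any $\mathrm{I}(A_{[i-1]};A_i)\ge 1$ for some $i\in\{2,\ldots,n\}$, then the block $n\sum_{i=2}^{n}\mathrm{I}(A_{[i-1]};A_i)\ge n$ already exceeds $\mathrm{H}(C)$; similarly $\mathrm{I}(A_{[n]};A_{n+1})\ge 1$ would give $(n+1)\,\mathrm{I}(A_{[n]};A_{n+1})\ge n+1>\mathrm{H}(C)$. Hence all of these intersections vanish, so $A_{[n+1]}=A_1+\cdots+A_{n+1}$ is a direct sum and
\[
\sum_{i=1}^{n+1}\dim A_i \;=\; \dim A_{[n+1]} \;\le\; \dim V \;\le\; n.
\]

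Because $n+1$ non-negative integers cannot sum to $n$ or less unless one of them is zero, there exists $j\in[n+1]$ with $A_j=\mathrm{O}$. Dropping a zero summand leaves a direct sum unchanged, so $A_{[n+1]-j}=A_{[n+1]}$, and consequently
\[
\mathrm{I}(A_{[n+1]-j};C)\;=\;\dim(A_{[n+1]}\cap C),\qquad \mathrm{H}(C\mid A_{[n+1]})\;=\;\mathrm{H}(C)-\dim(A_{[n+1]}\cap C).
\]
Both of these terms appear on the right-hand side with coefficient $1$, and together they already equal $\mathrm{H}(C)$. Since every remaining summand is non-negative, the right-hand side is at least $\mathrm{H}(C)$, contradicting the assumption.

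The only delicate step is the pigeonhole observation that forces some $A_j=\mathrm{O}$, together with the bookkeeping showing that $\mathrm{H}(C\mid A_{[n+1]})+\mathrm{I}(A_{[n+1]-j};C)$ already rebuilds $\mathrm{H}(C)$; the $B_i$-terms and the $\tfrac{1}{n+1}\mathrm{H}(B_{[n+1]})$ summand never have to be examined.
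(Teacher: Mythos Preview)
Your argument is correct and lands on exactly the same key identity as the paper: once some index $k$ satisfies $A_{[n+1]-k}=A_{[n+1]}$, the two summands $\mathrm{H}(C\mid A_{[n+1]})+\mathrm{I}(A_{[n+1]-k};C)$ already equal $\mathrm{H}(C)$, and the remaining non-negative terms finish the contradiction.

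The only difference is how you produce that index. You first use the large coefficients $n$ and $n+1$ together with integrality of dimensions to force every $\mathrm{I}(A_{[i-1]};A_i)=0$, obtain a direct sum $A_1\oplus\cdots\oplus A_{n+1}$ of total dimension $\le n$, and then pigeonhole to find some $A_j=O$. The paper skips this detour: from $\dim A_{[n+1]}\le n$ alone there must exist $k$ with $A_k\le A_{[n+1]-k}$ (otherwise pick $v_i\in A_i\setminus A_{[n+1]-i}$ for each $i$; these $n+1$ vectors are linearly independent, contradicting $\dim V\le n$). The paper's route is shorter and does not rely on the coefficients or on integrality, but your version is equally valid and perhaps more in the spirit of Proposition~\ref{prop:caso cuando la desigualdad fano es valida para todo}, which also begins by killing the mutual-information terms.
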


\begin{proof}
We suppose that there exist vector subspaces $A_{1}$, $A_{2}$, $\ldots$,
$A_{n+1}$, $B_{1}$, $B_{2}$, $\ldots$, $B_{n+1}$, $C$ of a vector
space $V$ of dimension at most $n$ that do not hold the desired
inequality i.e. 
\[
\text{H}\left(C\right)>\frac{1}{n+1}\text{H}\left(B_{\left[n+1\right]}\right)+\text{H}\left(C\mid A_{\left[n+1\right]}\right)+\stackrel[i=1]{n+1}{\sum}\text{I}\left(A_{\left[n+1\right]-i};C\right)+\stackrel[i=1]{n+1}{\sum}\text{H}\left(C\mid A_{i},B_{i}\right)
\]
\[
+n\stackrel[i=2]{n}{\sum}\text{I}\left(A_{\left[i-1\right]};A_{i}\right)+\left(n+1\right)\text{I}\left(A_{\left[n\right]};A_{n+1}\right)+\stackrel[i=1]{n+1}{\sum}\text{H}\left(B_{i}\mid A_{\left[n+1\right]-i}\right),
\]
and find a contradiction. Since $\text{H}\left(A_{\left[n+1\right]}\right)\leq n$,
there exists at least one $A_{k}$ such that $A_{k}\leq A_{\left[n+1\right]-k}$.
Then, the summing $\text{H}\left(C\mid A_{\left[n+1\right]}\right)+\stackrel[i=1]{n+1}{\sum}\text{I}\left(A_{\left[n+1\right]-i};C\right)$
on the right side of the desired inequality can be write as $\text{H}\left(C\right)+\stackrel[i=1,i\neq k]{n+1}{\sum}\text{I}\left(A_{\left[n+1\right]-i};C\right)$.
Hence, the right side of the inequality has negative information measures
which is a contradiction.
\end{proof}
\begin{rem}
An alternative demonstration of the above proposition and Proposition
\ref{prop:caso cuando la desigualdad fano es valida para todo} can
be obtained by noting that Lemma \ref{lemma 1:  distingue caracteristicas que dividen n y la que no lo divide}
is trivial when the dimension of $V$ is at most $n$.
\end{rem}

\section{Network Coding}

We will first briefly review some concepts of network coding in order
to fix some index coding terms. We study network coding with networks
in representation circuit, see \cite{16}, so each node represents
a coding function and hence the same message flows every edge coming
out of the same node. We emphasize that this approach loses no generality
and can be modified to coincide with other network models such as
the one used by Dougherty et al. \cite{5,6}. Formally, a \emph{network}
$\mathcal{N}=\left(V,E\right)$ is an aciclic multidirected-graph.
There exist source and receiver nodes and a (demand) function $\tau$
from collection of receivers $T$ onto collection of sources $S$.
There exist an alphabet $\mathcal{A}$, and a finite collection of
$k$-tuples of $\mathcal{A}$ called messages. Each source node has
a message. A $\left(k,n\right)$-network code specifies a alphabet
$\mathcal{A}$, two natural numbers $k$ and $n$, and a collection
of functions, one for each node of the network $\left(f_{v}\right)_{v\in V}\text{,}$
such that 

- If $v$ is a source, $f_{v}=\text{id}_{\mathcal{A}^{k}}$ (these
functions are generally omitted).

- If $v$ is not neither source or receiver, $f_{v}$ is a function
from $\text{Im}f_{v^{-}}$ to $\mathcal{A}^{n}$, where $\text{Im}f_{v^{-}}:=\underset{w\in v^{-}}{\prod}\text{Im}f_{w}$.

- If $v$ is a receiver, $f_{v}$ is a function called decoding function
from $\text{Im}f_{v^{-}}$ to $\mathcal{A}^{k}$..........

A network code is \emph{linear} if all their functions are linear
fictions over the same finite field. 

To capture the idea of transmit information through the network, there
is another collection of functions $\left(f_{v}^{*}\right)_{v\in V}$
on $\mathcal{A}^{k\left|S\right|}$, specified by the network code,
defined by 

- $f_{v}^{*}:=\pi_{v}$, if $v$ is a source.

- $f_{v}^{*}\left(x\right):=f_{v}\left(f_{v^{-}}^{*}\left(x\right)\right)=f_{v}\left(\left(f_{w}^{*}\left(x\right)\right)_{w\in v^{-}}\right)$
for all $x\in\mathcal{A}^{k\left|S\right|}$, if $v\in V-S$.

The value $f_{v}^{*}\left(x\right)$ gives the message that is carried
on the node for a given tuple of messages $x$. A network code is
a\emph{ solution} if for all tuple of messages $x$ and $t\in T$,
$f_{t}^{*}\left(x\right)=x_{\tau\left(t\right)}$ (i.e. the demand
of each receiver is satisfied).

The network coding problem of $\mathcal{N}$ is to find some alphabet,
and efficient solution over this alphabet. The efficiency is measured
by the ratio $\frac{k}{n}$. The \emph{capacity of $\mathcal{N}$
respect to a class of functions $\mathcal{D}$ over $\mathcal{A}$}
is 
\[
\mathrm{C}_{\mathcal{D}}^{\mathcal{A}}\left(\mathcal{N}\right):=\sup\left\{ \frac{k}{n}:\text{ \ensuremath{\exists} a \ensuremath{\left(k,n\right)}-solution in }\mathcal{D}\text{ over }\mathcal{A}\right\} \text{.}
\]
$\mathcal{D}$ is usually though as the collection of all network
codes, in this case the capacity is usually refereed as \emph{non-linear
coding capacity}. Also $\mathcal{D}$ can be taken as the collection
of linear codes over determined finite fields (or over any finite
field).

A network is defined to be \cite{6,7}:
\begin{itemize}
\item \emph{Solvable over $\mathcal{A}$} if there exists a $\left(1,1\right)$-solution
over $\mathcal{A}$, and\emph{ solvable} if the network is solvable
over some $\mathcal{A}$.
\item \emph{Scalar linearly solvable over $\mathbb{F}$} if there exists
a $\left(1,1\right)$-linear solution over $\mathbb{F}$, and\emph{
scalar linearly solvable} if the network is scalar linearly solvable
over some $\mathbb{F}$.
\item \emph{(Vector) Linearly solvable over $\mathbb{F}$} if there exists
a $\ensuremath{\left(k,k\right)}$-linear solution over $\mathbb{F}$,
for some $\ensuremath{k}\geq1$,\emph{ and linearly solvable} if the
network is (vector) linearly solvable over some $\mathbb{F}$.
\item \emph{Asymptotically solvable over $\mathcal{A}$} if for any $\epsilon>0$,
there exists a $\ensuremath{\left(k,n\right)}$-solution over $\mathcal{A}$
such that $\frac{k}{n}>1-\epsilon$, and the network is \emph{asymptotically
solvable} if the network is asymptotically solvable over some $\mathcal{A}$.
\item \emph{Asymptotically linearly solvable over $\mathbb{F}$} if for
any $\epsilon>0$, there exists a $\ensuremath{\left(k,n\right)}$-
linear solution over $\mathbb{F}$ such that $\frac{k}{n}>1-\epsilon$,
and the network is \emph{asymptotically linearly solvable} if the
network is asymptotically linearly solvable over some $\mathbb{F}$.
\end{itemize}
In this paper, we will use the following class of networks.
\begin{defn}
Let $m$ be a natural number. A \emph{$m$-index coding-network} is
a network with sources $S$ and receivers $T$ and a collection $\left[m\right]$
of $m$-intermediate nodes called $m$-block such that $S\times\left[m\right]$,$\left[m\right]\times T$$\subseteq E$.
\end{defn}

The network in case $m=1$ is simply called index coding-network and
corresponds to the index coding instance studied in \cite{2,3}. In
this case, the set of messages indexed by nodes of $t^{-}\cap S$
is known as the additional information of $t$. The message carried
on intermediate node is called broadcast message. Also, the network
is completely determined by $\left(S,E^{*}\right)$, where $E^{*}:=\left\{ \left(\tau\left(t\right),t^{-}\cap S\right)\in E:t\in T\right\} $.
To refer to these networks, we write $\mathcal{N}=\left(S,E^{*}\right)$.
From this, it is easy to obtain other $m$-index coding network $\mathcal{N}\left[m\right]=\left(S,E\right)$,
letting $E=\left(S\times\left[m\right]\right)\cup\left(\left[m\right]\times T\right)\cup E^{*}$.
The relationship between $\mathcal{N}$ and $\mathcal{N}\left[m\right]$
is established by the following lemma.
\begin{lem}
\label{lem:relaci=0000F3n entre un m index coding y index coding}Let
$m\in\mathbb{N}$. A $\left(k,n\right)$-solution of index coding-network
$\mathcal{N}$, implies a $\left(mk,n\right)$-solution of $\mathcal{N}\left[m\right]$.
Indeed, $\text{C}_{\mathcal{D}}\left(\mathcal{N}\left[m\right]\right)=m\text{C}_{\mathcal{D}}\left(\mathcal{N}\right)$,
where $\mathcal{D}$ can be the collection of all the codes or linear
codes.
\end{lem}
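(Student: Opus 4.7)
The plan is to establish both claims by a direct block-wise parallelization argument, using the fact that in an index-coding network the only ``narrow'' part of the channel is the collection of intermediate (broadcast) nodes, everything else being direct source-to-receiver wiring.

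For the first assertion, suppose $(f_v)_{v\in V}$ is a $(k,n)$-solution of $\mathcal{N}=(S,E^{*})$, whose unique broadcast node I will call $b$. In $\mathcal{N}[m]$ the broadcast layer is replaced by $m$ parallel intermediate nodes $b_1,\ldots,b_m$, each receiving from every source and feeding every receiver. I would define an $(mk,n)$-code on $\mathcal{N}[m]$ by viewing each source message $x_s\in\mathcal{A}^{mk}$ as a concatenation $x_s=(x_s^{(1)},\ldots,x_s^{(m)})$ with $x_s^{(j)}\in\mathcal{A}^{k}$, and letting $b_j$ compute $f_{b_j}\bigl((x_s)_{s\in S}\bigr):=f_b\bigl((x_s^{(j)})_{s\in S}\bigr)$; that is, the $j$-th broadcast node runs the original code on the $j$-th block of source messages. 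Since in $\mathcal{N}[m]$ each receiver $t$ still has exactly the additional information $t^{-}\cap S$ that it had in $\mathcal{N}$, it can apply the original decoding function $f_t$ to the $n$-tuple produced by $b_j$ and the $j$-th block of the additional information to recover the $j$-th block of its demanded message; concatenating over $j=1,\ldots,m$ recovers the full $mk$-tuple. The construction preserves linearity, because all the new coding functions are coordinate-wise copies of the original $f_v$.

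For the capacity equality, the forward inequality $\text{C}_{\mathcal{D}}(\mathcal{N}[m])\geq m\,\text{C}_{\mathcal{D}}(\mathcal{N})$ is immediate from the construction above, since it turns any $(k,n)$-solution of $\mathcal{N}$ into an $(mk,n)$-solution of $\mathcal{N}[m]$, of rate $mk/n$. For the reverse inequality I would aggregate the broadcast layer: given any $(K,N)$-solution of $\mathcal{N}[m]$, define a $(K,mN)$-solution of $\mathcal{N}$ by letting the single broadcast node of $\mathcal{N}$ output the concatenation of the $m$ intermediate outputs of $\mathcal{N}[m]$, and letting each receiver apply its $\mathcal{N}[m]$-decoder to this concatenated $mN$-tuple together with its additional information. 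The resulting rate is $K/(mN)=(K/N)/m$, giving $m\,\text{C}_{\mathcal{D}}(\mathcal{N})\geq\text{C}_{\mathcal{D}}(\mathcal{N}[m])$; again linearity is preserved, so the identity holds both for $\mathcal{D}$ equal to all codes and for $\mathcal{D}$ equal to the class of linear codes over a given field. The argument is essentially bookkeeping and I do not anticipate any real obstacle; the only point that must be checked is that each intermediate node's output in the constructed code depends only on the data that actually reaches it, which is automatic here since $b_j$'s output depends on the $j$-th block of every source, all of which are available to $b_j$ by the definition of $\mathcal{N}[m]$.
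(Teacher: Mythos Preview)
The paper states this lemma without proof, so there is nothing to compare against directly. Your argument is correct and is exactly the natural one: the block-wise parallelization of a $(k,n)$-solution of $\mathcal{N}$ across the $m$ intermediate nodes of $\mathcal{N}[m]$ yields an $(mk,n)$-solution, and conversely concatenating the $m$ intermediate outputs of a $(K,N)$-solution of $\mathcal{N}[m]$ into a single broadcast symbol of length $mN$ yields a $(K,mN)$-solution of $\mathcal{N}$; both directions clearly preserve linearity, giving the capacity identity.
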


\textbf{From parameter of index coding instances to network coding
parameters.} The broadcast rate for an index coding instance is defined
in \cite{2}. This parameter coincides with the inverse multiplicative
of the capacity of the index coding network associated to the instance.
In the following we show some results from \cite{4} in our network
coding context. 

We use the following linear program problem \cite{4}: \emph{The (LP)
linear program with constraint matrix $A$} for an index coding-network
$\mathcal{N}$ is to determine min$\left(z_{\emptyset}\right)$ for
tuples of non-negative real numbers $\left(z_{Y}\right)_{Y\subseteq S}$
such that
\begin{lyxlist}{00.00.0000}
\item [{(i)}] $z_{S}=\left|S\right|$
\item [{(ii)}] $\forall Z\subseteq Y$ $z_{Y}-z_{Z}\leq\left|Y-\text{cl}\left(Z\right)\right|$
, where $\text{cl}\left(Z\right):=Z\cup\left\{ s\in S:\exists\left(s,Y\right)\in E^{*},Y\subseteq Z\right\} .$
\item [{(iii)}] $Az\geq0$.
\end{lyxlist}
Optimal solution is denoted by $\mathrm{b}_{A}\left(\mathcal{N}\right)$.
The inverse multiplicative of this value is denoted\footnote{in case $\mathrm{b}=0$, $\mathrm{B}=\infty$.}
by $\mathrm{B}_{\mathrm{A}}\left(\mathcal{N}\right)$. We remark that
conditions (i) and (ii) are associated to information flow of $\mathcal{N}$,
and condition (iii) enumerates a list $A$ of constraints correspond
to information inequalities or (characteristic-dependent) linear rank
inequalities. When $A$ enumerates the constraints correspond to information
inequalities, $\mathrm{B}_{\mathrm{A}}$ is an upper bound on the
capacity of $\mathcal{N}$; when $A$ enumerates the constraints correspond
to (characteristic-dependent) linear rank inequalities, $\mathrm{B}_{\mathrm{A}}$
is an upper bound on the linear capacity of $\mathcal{N}$ over the
alphabets in which the linear rank inequalities are valid. This is
easy to see, consider a $\left(k,n\right)$-solution of $\mathcal{N}$
over $\mathcal{A}$. Let $X_{1}$, $\ldots$, $X_{\left|S\right|}$
be independent uniformly distributed random variables (associated
to messages) over $\mathcal{A}^{k}$ and $P$ be a random variable
(associated to broadcast message) over $\mathcal{A}^{n}$. Take the
base of the entropy function as $\left|\mathcal{A}\right|^{k}$. Let
$z_{Y}=\text{H}\left(X_{Y}\cup P\right)$, we can verify that $\left(z_{Y}\right)_{Y\subseteq S}$
is a feasible primal solution of linear program problem. Thus, $z_{\emptyset}\leq\text{H}\left(P\right)\leq\frac{n}{k}$,
yielding $\mathrm{C}\left(\mathcal{N}\right)\leq\mathrm{B}_{A}\left(\mathcal{N}\right)$.
The upper bound on the linear capacity is obtained in a similar way.
The subscript in $\mathrm{b}_{A}\left(\mathcal{N}\right)$ is omitted
when $A$ corresponds to the constraints of the submodular inequality.

The \emph{lexicographic product} of index coding networks $\mathcal{N}_{1}$
and $\mathcal{N}_{2}$, denoted by $\mathcal{N}_{1}\bullet\mathcal{N}_{2}$,
is a index coding network whose source set is $S_{1}\times S_{2}$.
Each receiver $t$ is indexed by a pair $\left(t_{1},t_{2}\right)$
of receivers of $\mathcal{N}_{1}$ and $\mathcal{N}_{2}$ such that
$\tau\left(t\right)=\left(\tau\left(t_{1}\right),\tau\left(t_{2}\right)\right)$
and $t^{-}\cap\left(S_{1}\times S_{2}\right)=\left[\left(t_{1}^{-}\cap S_{1}\right)\times S_{2}\right]\cup\left[\tau\left(t_{1}\right)\times\left(t_{2}^{-}\cap S_{2}\right)\right]$.
The $k$-fold lexicographic power of $\mathcal{N}$ is denoted by
$\mathcal{N}^{\bullet k}$. Since the broadcast rate is sub-multiplicative
and $\mathrm{b}$ is super-multiplicative under the lexicographic
products \cite{4}, the capacity of index coding-networks is super-multiplicative
and $\mathrm{B}$ is sub-multiplicative under the lexicographic products
i.e. $\mathrm{C}\left(\mathcal{N}_{1}\right)\mathrm{C}\left(\mathcal{N}_{2}\right)\leq\mathrm{C}\left(\mathcal{N}_{1}\bullet\mathcal{N}_{2}\right)$
and $\mathrm{B}\left(\mathcal{N}_{1}\bullet\mathcal{N}_{2}\right)\leq\mathrm{B}\left(\mathcal{N}_{1}\right)\mathrm{B}\left(\mathcal{N}_{2}\right)$.

We want to define linear programs, using our inequalities, whose solutions
behave super-multiplicatively under lexicographic products, we make
this by the following argument: In \cite[Theorem 6.3]{4}, it is presented
a matrix $B$ whose transpose matrix has the property that if $\alpha$
is the associated vector of a linear rank inequality over $\mathbb{F}$,
then $\beta=B^{t}\alpha$ is the associated vector of a tight linear
rank inequality\footnote{A linear inequality $\alpha\bullet v\geq0$ is called tight if it
is balanced and $\sum\alpha_{i}=1$.} over $\mathbb{F}$. We can take the associated vectors of the inequalities
of the Theorems \ref{theorema fuerte caracterisitica rango lineal para conjunto finito}
and \ref{teorema fuerte desigualdad caracteristica rango lineal para cofinitos}.
Then, we apply this matrix to get two tight characteristic-dependent
linear rank inequalities: For any $A_{1}$, $A_{2}$, $\ldots$, $A_{n+1}$,
$B_{1}$, $B_{2}$, $\ldots$, $B_{n+1}$, $C$ and $P$ vector subspaces
of $V$, we get
\[
\text{H}\left(B_{\left[n+1\right]}\mid P\right)+\stackrel[i=1]{n+1}{\sum}\text{H}\left(B_{i}\mid A_{\left[n+1\right]},B_{\left[n+1\right]-i},C,P\right)+\left(n+1\right)\text{H}\left(C\mid A_{\left[n+1\right]},B_{\left[n+1\right]},P\right)
\]
\[
\leq\left(n+1\right)\stackrel[i=1]{n+1}{\sum}\text{I}\left(A_{\left[n+1\right]-i};C\mid P\right)+n\text{I}\left(A_{\left[n+1\right]};C\mid P\right)+\stackrel[i=1]{n+1}{\sum}\text{H}\left(B_{i}\mid A_{\left[n+1\right]-i},P\right)
\]
\begin{equation}
+\stackrel[i=1]{n+1}{\sum}\text{H}\left(B_{i}\mid A_{i},C,P\right)+n\stackrel[i=2]{n}{\sum}\text{I}\left(A_{\left[i-1\right]};A_{i}\mid P\right)+\left(n+1\right)\text{I}\left(A_{\left[n\right]};A_{n+1}\mid P\right)+\left(n+1\right)\text{H}\left(C\mid A_{\left[n+1\right]},P\right)\label{eq:desigualdad en una variable adicional sobre divisores de n}
\end{equation}
when $\text{char}\left(\mathbb{F}\right)$ divides $n$;
\[
\text{H}\left(C\mid P\right)+\left(n+1\right)\text{H}\left(C\mid A_{\left[n+1\right]},B_{\left[n+1\right]},P\right)+\frac{n+2}{n+1}\stackrel[i=1]{n+1}{\sum}\text{H}\left(B_{i}\mid A_{\left[n+1\right]},B_{\left[n+1\right]-i},C,P\right)
\]
\[
\leq\frac{1}{n+1}\text{H}\left(B_{\left[n+1\right]}\mid P\right)+\text{H}\left(C\mid A_{\left[n+1\right]},P\right)+\stackrel[i=1]{n+1}{\sum}\text{I}\left(A_{\left[n+1\right]-i},;C\mid P\right)+\stackrel[i=1]{n+1}{\sum}\text{H}\left(C\mid A_{i},B_{i},P\right)
\]
\begin{equation}
+n\stackrel[i=2]{n}{\sum}\text{I}\left(A_{\left[i-1\right]};A_{i}\mid P\right)+\left(n+1\right)\text{I}\left(A_{\left[n\right]};A_{n+1}\mid P\right)+\stackrel[i=1]{n+1}{\sum}\text{H}\left(B_{i}\mid A_{\left[n+1\right]-i},P\right),\label{eq:desigualdad en una variable adicional sobre no divisores de n}
\end{equation}
when $\text{char}\left(\mathbb{F}\right)$ does not divide $n$. We
use these inequalities to define two new linear programs adding the
constraints imply by each one of theses inequalities to the matrix
$A$ of LP with constraint matrix given by submodular inequality.
The linear program which use the first inequality, we shall call LP-$\mathcal{A}_{n}$,
and the linear program which use the second inequality, we shall call
LP-$\mathcal{B}_{n}$. The optimal solutions are denoted by $\mathrm{b}_{\mathcal{A}_{n}}$
and $\mathrm{b}_{\mathcal{B}_{n}}$. The following inequality is a
constraint which is satisfied by  LP-$\mathcal{A}_{n}$, this is obtained
from inequality \ref{eq:desigualdad en una variable adicional sobre divisores de n}
and \cite[Lemma 6.4]{4},\smallskip{}
\[
\left(2n^{2}+3n+1\right)z_{\emptyset}+2\left(n+1\right)z_{A_{\left[n+1\right]},B_{\left[n+1\right]},C}+z_{B_{\left[n+1\right]}}+\stackrel[i=1]{n+1}{\sum}\left(z_{A_{i},C}+\left(n+1\right)z_{A_{\left[n+1\right]-i},C}\right)+\left(n+2\right)z_{A_{\left[n+1\right]}}
\]
\[
\leq+z_{A_{\left[n+1\right]},C}+n\stackrel[i=1]{n}{\sum}\left(z_{A_{i}}+z_{A_{\left[n+1\right]-i}}\right)+\left(n^{2}+3n+1\right)z_{C}
\]
\begin{equation}
+\left(n+1\right)\left(z_{A_{\left[n\right]}}+z_{A_{\left[n+1\right]},B_{\left[n+1\right]}}+z_{A_{n+1}}\right)+\stackrel[i=1]{n+1}{\sum}\left(z_{A_{\left[n+1\right]},B_{\left[n+1\right]-i},C}+z_{A_{\left[n+1\right]-i},B_{i}}+z_{A_{i},B_{i},C}\right);\label{eq:desigualdad del esquema tight y homo divisores de n}
\end{equation}
in analogous way, the following inequality is a constraint which is
satisfied by LP-$\mathcal{B}_{n}$, this is obtained from inequality
\ref{eq:desigualdad en una variable adicional sobre no divisores de n}
and \cite[Lemma 6.4]{4},
\[
+\left(2n+3\right)z_{A_{\left[n+1\right]},B_{\left[n+1\right]},C}+\stackrel[i=1]{n+1}{\sum}\left(z_{A_{\left[n+1\right]-i},C}+z_{A_{i},B_{i}}\right)+\left(n+2\right)z_{A_{\left[n+1\right]}}+\frac{n^{3}+2n^{2}+2n+2}{n+1}z_{\emptyset}
\]
\[
\leq\frac{1}{n+1}z_{B_{\left[n+1\right]}}+z_{C,A_{\left[n+1\right]}}+\left(n+1\right)z_{A_{\left[n+1\right]},B_{\left[n+1\right]}}+\frac{n+2}{n+1}\stackrel[i=1]{n+1}{\sum}z_{A_{\left[n+1\right]},B_{\left[n+1\right]-i},C}
\]
\begin{equation}
+z_{A_{\left[n\right]}}+n\stackrel[i=1]{n}{\sum}z_{A_{i}}+\left(n+1\right)z_{A_{n+1}}+nz_{C}+\stackrel[i=1]{n+1}{\sum}\left(z_{A_{i},B_{i},C}+z_{A_{\left[n+1\right]-i},B_{i}}\right).\label{eq:eq:desigualdad del esquema tight y homo NO divisores de n}
\end{equation}
By last, from \cite[Theorem 3.4]{4}, we get that optimal solutions
of our LP-problems are super-multiplicative under lexicographic products.\bigskip{}

\textbf{Index coding from matroids. }A matroid is an abstract structure
that captures the notion of independence in linear algebra \cite{15}.
Let $\mathcal{M}=\left(S,r\right)$ be a matroid and let $J$ be the
set of coloops of $\mathcal{M}$ (each element is in no circuit).
Consider the matroid obtained by deletion of $J$, $\mathcal{M}\mid J=\left(S-J,r\mid_{J}\right)$.
Define \emph{the index coding network associated to $\mathcal{M}$}
by an index coding-network, denoted by $\mathcal{N}_{\mathcal{M}}$,
 with source set $S-J$ and $E_{\mathcal{M}}^{*}:=\left\{ \left(s,C-s\right):C\text{ is a circuit in }\mathcal{M}\mid J,s\in C\right\} $.
This construction is a modification of the construction given by Blasiak
et al. \cite[Definition 5.1]{4}. Our network has a smaller number
of sources and receivers because it is completely determined by the
circuits of the matroid. We introduce the following definition in
order to study the properties of this network.\medskip{}

\begin{defn}
An index coding network $\mathcal{N}'=\left(S,E_{\mathcal{N}'}^{*}\right)$
is called an index coding-subnetwork of $\mathcal{N}$ if $E_{\mathcal{N}'}^{*}\subseteq E_{\mathcal{N}}^{*}$
and there exists a collection $\left\{ \left(s,S_{s}\right)\right\} _{s\in S}$
of elements of $E_{\mathcal{N}'}^{*}$ such that $T:=\bigcup_{s\in S}S_{s}$
is a minimum subset of $S$, with the property that for all $s\in S$,
$\left(s,T_{s}\right)\in E_{\mathcal{N}}^{*}$, for some $T_{s}\subseteq T$.
This is equivalent to $\text{cl}_{\mathcal{N}'}\leq\text{cl}_{\mathcal{N}}$
and $r_{\text{cl}_{\mathcal{N}'}}=r_{\text{cl}_{\mathcal{N}}}$, where
$r_{\text{cl}}:=\min\left\{ \left|T\right|:\text{cl}\left(T\right)=S\right\} $. 
\end{defn}

The definition of subnetwork guarantees that the network flow of a
subnetwork behaves like the network flow of the network. Specifically,
a solution of $\mathcal{N}$ is a solution of $\mathcal{N}'$ and
$\mathrm{b}\left(\mathcal{N}'\right)\leq\mathrm{b}\left(\mathcal{N}\right)$.
Furthermore, the index coding network of a matroid $\mathcal{M}$
is an index coding-subnetwork of the index coding-network obtained
from the index coding instance associated to the matroid $\mathcal{M}\mid J$
of Blasiak et al.. With this in mind, the following proposition (and
proof) is a rewriting of \cite[Proposition 5.2 and Theorem 5.4]{4}.
\begin{prop}
\label{proposici=0000F3n sobre capacidades de subredes}Let $\mathcal{M}=\left(S,r\right)$
be a matroid. For any index coding-subnetwork $\mathcal{N}$ of the
index coding network $\mathcal{N}_{\mathcal{M}}$,
\[
\mathrm{B}\left(\mathcal{N}\right)=\frac{1}{\left|S\right|-r_{\mathcal{M}}}.
\]
Also, if some $\mathcal{M}$ is representable over $\mathbb{F}$,
then 
\[
\mathrm{C}\left(\mathcal{N}\right)=\mathrm{C}_{\text{linear}}^{\mathbb{F}}\left(\mathcal{N}\right)=\frac{1}{\left|S\right|-r_{\mathcal{M}}}
\]
and this capacity is achieved by a $\left(1,\left|S\right|-r_{\mathcal{M}}\right)$-linear
solution over $\mathbb{F}$.
\end{prop}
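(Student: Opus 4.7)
The plan is to split the statement into two independent parts: an LP argument that pins down $\mathrm{b}(\mathcal{N})$ (hence $\mathrm{B}(\mathcal{N})$) to $|S|-r_{\mathcal{M}}$ for \emph{any} subnetwork, and, when $\mathcal{M}$ is representable over $\mathbb{F}$, an explicit linear code built from the representation that attains the reciprocal rate. The key structural observation is $\text{cl}_{\mathcal{N}_{\mathcal{M}}}=\text{cl}_{\mathcal{M}\mid J}$: the circuit characterisation of matroid closure says $s\in\text{cl}_{\mathcal{M}\mid J}(Z)$ iff some circuit $C$ of $\mathcal{M}\mid J$ has $s\in C$ and $C-s\subseteq Z$, which is exactly the defining rule of the network closure. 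Together with the subnetwork axiom $r_{\text{cl}_{\mathcal{N}}}=r_{\text{cl}_{\mathcal{N}_{\mathcal{M}}}}=r(\mathcal{M}\mid J)=r_{\mathcal{M}}-|J|$, this is what lets every quantity in the LP be translated into matroid language.

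For the upper bound on $\mathrm{B}(\mathcal{N})$, pick any $T\subseteq S-J$ of minimum size with $\text{cl}_{\mathcal{N}}(T)=S-J$; then $|T|=r_{\mathcal{M}}-|J|$. Constraint (ii) with $(Y,Z)=(S-J,T)$ gives $|S-J|=z_{S-J}\leq z_{T}+|(S-J)-\text{cl}_{\mathcal{N}}(T)|=z_{T}$, while $(Y,Z)=(T,\emptyset)$ yields $z_{T}\leq z_{\emptyset}+|T-\text{cl}_{\mathcal{N}}(\emptyset)|$; since a minimum spanning $T$ is independent in $\mathcal{M}\mid J$ it meets no loop, so the last term equals $|T|$. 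Combining, $z_{\emptyset}\geq|S-J|-|T|=|S|-r_{\mathcal{M}}$, hence $\mathrm{b}(\mathcal{N})\geq|S|-r_{\mathcal{M}}$. For the reverse inequality I exhibit the feasible primal solution
\[
z_{Y}:=r_{\mathcal{M}\mid J}(Y)+|S|-r_{\mathcal{M}},\qquad Y\subseteq S-J,
\]
for the LP whose constraint matrix is submodularity alone. Directly, $z_{S-J}=|S-J|$, $z_{\emptyset}=|S|-r_{\mathcal{M}}$, and for $Z\subseteq Y$ the inequality $z_{Y}-z_{Z}=r(Y)-r(Z)\leq|Y-\text{cl}_{\mathcal{M}\mid J}(Z)|\leq|Y-\text{cl}_{\mathcal{N}}(Z)|$ is a standard matroid rank fact (using $\text{cl}_{\mathcal{N}}\leq\text{cl}_{\mathcal{M}\mid J}$), giving (ii). Thus $\mathrm{b}(\mathcal{N})=|S|-r_{\mathcal{M}}$ and $\mathrm{B}(\mathcal{N})=1/(|S|-r_{\mathcal{M}})$.

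Now assume $\mathcal{M}$ is representable over $\mathbb{F}$ by an $r\times|S|$ matrix $M$ with columns $\{v_{s}\}_{s\in S}$, and let $M'$ be the submatrix on columns $S-J$; it has rank $r_{\mathcal{M}}-|J|$, so its null space has dimension $|S-J|-(r_{\mathcal{M}}-|J|)=|S|-r_{\mathcal{M}}$. Let $N$ be a generator matrix of that null space, of size $(|S|-r_{\mathcal{M}})\times|S-J|$. Declare the $(1,|S|-r_{\mathcal{M}})$-linear scheme whose broadcast is $y:=N(x_{s})_{s\in S-J}$. Each receiver of $\mathcal{N}$ is of the form $(s,C-s)$ for some circuit $C$ of $\mathcal{M}\mid J$ with $s\in C$; the dependence $\sum_{t\in C}\alpha_{t}v_{t}=0$ (with $\alpha_{s}\neq 0$) gives a vector $\alpha\in\mathbb{F}^{|S-J|}$ supported on $C$ that lies in the row span of $N$, so an appropriate linear combination of the coordinates of $y$ recovers $\sum_{t\in C}\alpha_{t}x_{t}$, and together with $\{x_{t}:t\in C-s\}$ this solves uniquely for $x_{s}$. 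Hence $\mathrm{C}_{\text{linear}}^{\mathbb{F}}(\mathcal{N})\geq 1/(|S|-r_{\mathcal{M}})$, and the chain $\mathrm{C}_{\text{linear}}^{\mathbb{F}}(\mathcal{N})\leq\mathrm{C}(\mathcal{N})\leq\mathrm{B}(\mathcal{N})=1/(|S|-r_{\mathcal{M}})$ pins both quantities to the claimed value. The main point of care is that the closure manipulations are done in the \emph{subnetwork} rather than in $\mathcal{N}_{\mathcal{M}}$, and this is exactly what the subnetwork axiom $r_{\text{cl}_{\mathcal{N}}}=r_{\text{cl}_{\mathcal{N}_{\mathcal{M}}}}$ was formulated to supply; once it is invoked, both the LP and the coding step go through unchanged.
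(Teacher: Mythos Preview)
Your proof is correct and follows precisely the approach the paper invokes: the paper does not write out its own argument here but states that the proposition and its proof are a rewriting of \cite[Proposition~5.2 and Theorem~5.4]{4}. Your LP argument (matroid rank shifted by $|S|-r_{\mathcal{M}}$ as feasible primal, plus the spanning-set lower bound via the subnetwork axiom $r_{\mathrm{cl}_{\mathcal{N}}}=r_{\mathrm{cl}_{\mathcal{N}_{\mathcal{M}}}}$) together with the null-space code from an $\mathbb{F}$-representation is exactly that reasoning, carried over to the subnetwork setting.
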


\textbf{Applications. }We use index coding-networks from matroids
for our theorem. Fixed $n$. For a field $\mathbb{F}$, matrix $L_{n}$
over $\mathbb{F}$ induces a vector matroid $\mathcal{M}\left(L_{n}\right)$
with ground set $S:=\left\{ A_{1},\ldots,A_{n+1},B_{1},\ldots,B_{n+1},C\right\} $,
some of these are known in \cite{14} for $n$ prime. If we change
the field, it is possible that the vector matroid changed. However,
these matroids have some properties in common. Specifically, certain
subsets of the ground set of $\mathcal{M}\left(L_{n}\right)$ are
always circuits according to the characteristic of $\mathbb{F}$ divides
or does not $n$. We classify them in two types\footnote{Here we use the notation $A_{I}:=\left\{ A_{i}:i\in I\right\} $.}:
The collection $\mathcal{A}_{n}:=\left\{ A_{\left[n+1\right]}C,A_{\left[n+1\right]-i}B_{i},A_{i}B_{i}C,B_{\left[n+1\right]}:i\in\left[n+1\right]\right\} $
is a subclass of circuits in any $\mathcal{M}\left(L_{n}\right)$
over $\mathbb{F}$, when $\text{char}\left(\mathbb{F}\right)$ divides
$n$; and the collection $\mathcal{B}_{n}:=\left\{ A_{\left[n+1\right]}C,A_{\left[n+1\right]-i}B_{i},A_{i}B_{i}C,B_{\left[n+1\right]}C:i\in\left[n+1\right]\right\} $
is a subclass of circuits in any $\mathcal{M}\left(L_{n}\right)$
over $\mathbb{F}$, when $\text{char}\left(\mathbb{F}\right)$ does
not divide $n$. We define $\mathcal{N}_{\mathcal{A}_{n}}$ as the
index coding with the source set $S$ and $E_{\mathcal{A}_{n}}^{*}:=\left\{ \left(s,C-s\right):C\in\mathcal{A}_{n},s\in C\right\} $;
and $\mathcal{N}_{\mathcal{B}_{n}}$ as the index coding with the
source set $S$  and $E_{\mathcal{B}_{n}}^{*}:=\left\{ \left(s,C-s\right):C\in\mathcal{B}_{n},s\in C\right\} $.

Before continuing, the following statements are useful.
\begin{lem}
\label{lem:codigo linear a partir del producto lexi de index coding networks}For
any $\mathcal{N}_{1}$ and $\mathcal{N}_{2}$. If $\mathcal{N}_{1}$
has a $\left(n,m\right)$-linear solution and $\mathcal{N}_{2}$ has
a $\left(k,n\right)$-linear solution both over the same field, then
$\mathcal{N}_{1}\bullet\mathcal{N}_{2}$ has a $\left(k,m\right)$-linear
solution.
\end{lem}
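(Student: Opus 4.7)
The plan is to build the solution for $\mathcal{N}_1\bullet\mathcal{N}_2$ by hierarchically composing the two given linear codes: run the $(k,n)$-code of $\mathcal{N}_2$ ``inside'' each fiber over $S_1$, then run the $(n,m)$-code of $\mathcal{N}_1$ ``outside'' across the fibers. Since each operation is linear over the common field $\mathbb{F}$, the composition will be linear too.

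More concretely, identify the sources of $\mathcal{N}_1\bullet\mathcal{N}_2$ with pairs $(s_1,s_2)\in S_1\times S_2$, each carrying a message $x_{s_1,s_2}\in\mathbb{F}^k$. For each fixed $s_1\in S_1$, feed the collection $(x_{s_1,s_2})_{s_2\in S_2}$ as an input to an instance of the $(k,n)$-solution of $\mathcal{N}_2$; denote the resulting broadcast by $P_{s_1}\in\mathbb{F}^n$. Now regard $(P_{s_1})_{s_1\in S_1}$ as the source tuple of an instance of $\mathcal{N}_1$ and feed it to the $(n,m)$-solution, producing the single broadcast $Q\in\mathbb{F}^m$. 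Declare $Q$ to be the broadcast of the $(k,m)$-code on $\mathcal{N}_1\bullet\mathcal{N}_2$. All maps involved are $\mathbb{F}$-linear, so the global encoder is linear.

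It then remains to check that every receiver $t=(t_1,t_2)$ can decode $x_{\tau(t_1),\tau(t_2)}$. By definition of the lexicographic product, the additional information available at $t$ is precisely $\{x_{s_1,s_2}:s_1\in t_1^-\cap S_1,\,s_2\in S_2\}\cup\{x_{\tau(t_1),s'}:s'\in t_2^-\cap S_2\}$. Using the first family together with the encoder of $\mathcal{N}_2$, the receiver recomputes $P_{s_1}$ for every $s_1\in t_1^-\cap S_1$; these are exactly the messages that $t_1$ sees as additional information in $\mathcal{N}_1$, so, together with the broadcast $Q$, the $(n,m)$-decoder of $\mathcal{N}_1$ at $t_1$ recovers $P_{\tau(t_1)}$. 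Combining $P_{\tau(t_1)}$ with $\{x_{\tau(t_1),s'}:s'\in t_2^-\cap S_2\}$, the $(k,n)$-decoder of $\mathcal{N}_2$ at $t_2$ then returns $x_{\tau(t_1),\tau(t_2)}$, as desired.

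No single step is truly delicate; the only care needed is bookkeeping on the adjacency structure of $\mathcal{N}_1\bullet\mathcal{N}_2$ to verify that the two decoding stages only ever use information that is available at $(t_1,t_2)$. That verification is exactly what the definition $t^-\cap(S_1\times S_2)=[(t_1^-\cap S_1)\times S_2]\cup[\tau(t_1)\times(t_2^-\cap S_2)]$ was designed to make automatic, which is why the two-stage construction works. Linearity over $\mathbb{F}$ is preserved at every stage since composing and restricting $\mathbb{F}$-linear maps yields $\mathbb{F}$-linear maps.
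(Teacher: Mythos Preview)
Your proof is correct and follows essentially the same approach as the paper: both compose the $(k,n)$-encoder of $\mathcal{N}_2$ fiberwise to obtain intermediate symbols $P_{s_1}\in\mathbb{F}^n$, then apply the $(n,m)$-encoder of $\mathcal{N}_1$ to produce the broadcast $Q\in\mathbb{F}^m$, and decode in two stages using the side information prescribed by the lexicographic product. The paper's argument is just a more compressed notation for exactly the construction you describe.
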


\begin{proof}
Let $f$ be the function on the intermediate node and $f_{t_{1}}$
be the decoding function on a receiver $t_{1}$ of the desired $\left(n,m\right)$-linear
solution of $\mathcal{N}_{1}$, and let $g$ be the function on the
intermediate node and $g_{t_{2}}$ be the decoding function on a receiver
$t_{2}$ of the desired $\left(k,n\right)$-linear solution of $\mathcal{N}_{2}$.
Define $g'\left(x\right):=\left(g\left(x_{s_{1}\times S_{2}}\right)\right)_{s_{1}\in S_{1}}$,
$x\in\mathbb{F}^{k\left|S_{1}\times S_{2}\right|}$, and let $h=fg'$
be the function on the intermediate node in $\mathcal{N}_{1}\bullet\mathcal{N}_{2}$.
We obtain the broadcast message $h\left(x\right)\in\mathbb{F}^{m}$.
Let $t$ be a receiver in $\mathcal{N}_{1}\bullet\mathcal{N}_{2}$
such that $\tau\left(t\right)=\left(\tau\left(t_{1}\right),\tau\left(t_{2}\right)\right)$
and $t^{-}\cap\left(S_{1}\times S_{2}\right)=\left[\left(t_{1}^{-}\cap S_{1}\right)\times S_{2}\right]\cup\left[\tau\left(t_{1}\right)\times\left(t_{2}^{-}\cap S_{2}\right)\right]$.
We have $f_{t_{1}}\left(h\left(x\right),\left(g\left(x_{s_{1}\times S_{2}}\right)\right)_{s_{1}\in t_{1}^{-}\cap S_{1}}\right)=f_{t_{1}}\left(f\left(\left(g\left(x_{s_{1}\times S_{2}}\right)\right)_{s_{1}\in S_{1}}\right),\left(g\left(x_{s_{1}\times S_{2}}\right)\right)_{s_{1}\in t_{1}^{-}\cap S_{1}}\right)=g\left(x_{\tau\left(t_{1}\right)\times S_{2}}\right)$.
Then, $g_{t_{2}}\left(g\left(x_{\tau\left(t_{1}\right)\times S_{2}}\right),x_{\tau\left(t_{1}\right)\times\left(t_{2}^{-}\cap S_{2}\right)}\right)=x_{\left(\tau\left(t_{1}\right),\tau\left(t_{2}\right)\right)}$.
These equations and $h$ clearly define a $\left(k,m\right)$-linear
solution of $\mathcal{N}_{1}\bullet\mathcal{N}_{2}$.
\end{proof}
\begin{lem}
\emph{\label{prop: codigo lineal para una potencia lexicografica de una index coding network}}For
$k\in\mathbb{N}$. If $\mathcal{N}$ has a $\left(1,n\right)$-linear
solution, then $\mathcal{N}^{\bullet k}$ has a $\left(1,n^{k}\right)$-linear
solution. 
\end{lem}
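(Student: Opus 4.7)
The plan is to proceed by induction on $k$, using Lemma \ref{lem:codigo linear a partir del producto lexi de index coding networks} as the essential engine together with associativity of the lexicographic product (which is immediate from its definition via source pairs and the recursive description of the demand function and of $t^{-}\cap(S_{1}\times S_{2})$).

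For the base case $k=1$ there is nothing to prove, since $\mathcal{N}^{\bullet 1}=\mathcal{N}$ and $n^{1}=n$, so the hypothesis itself supplies the required $(1,n^{1})$-linear solution.

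For the inductive step, assume $\mathcal{N}^{\bullet k}$ admits a $(1,n^{k})$-linear solution, and consider $\mathcal{N}^{\bullet(k+1)}=\mathcal{N}^{\bullet k}\bullet\mathcal{N}$. To invoke Lemma \ref{lem:codigo linear a partir del producto lexi de index coding networks} with $\mathcal{N}_{1}:=\mathcal{N}^{\bullet k}$ and $\mathcal{N}_{2}:=\mathcal{N}$, I need the parameter matching $(n,m)$ for $\mathcal{N}_{1}$ and $(k,n)$ for $\mathcal{N}_{2}$ of that lemma: the second factor already carries a $(1,n)$-linear solution by the running hypothesis on $\mathcal{N}$, so I need a $(n,n^{k+1})$-linear solution for the first factor. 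I would obtain this by applying the inductive $(1,n^{k})$-linear solution coordinate-wise: given any linear coding function $f_{v}\colon\mathbb{F}^{\ast}\to\mathbb{F}^{n^{k}}$ of that solution, define $f_{v}^{(n)}$ on vectors of length $n$ by applying $f_{v}$ independently to each coordinate; because all functions are $\mathbb{F}$-linear, this produces a valid $(n,n\cdot n^{k})=(n,n^{k+1})$-linear solution of $\mathcal{N}^{\bullet k}$. Lemma \ref{lem:codigo linear a partir del producto lexi de index coding networks} then delivers a $(1,n^{k+1})$-linear solution of $\mathcal{N}^{\bullet k}\bullet\mathcal{N}=\mathcal{N}^{\bullet(k+1)}$, which closes the induction.

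The only step that is not pure bookkeeping is the coordinate-wise promotion of a $(1,n^{k})$-linear solution to a $(n,n^{k+1})$-linear solution, and even this is routine once one notes that linearity guarantees the independence of coordinates. Apart from that, the proposition is essentially an iterated application of the preceding lemma.
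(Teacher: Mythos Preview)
Your proof is correct and follows essentially the same approach as the paper: induction on $k$ combined with Lemma~\ref{lem:codigo linear a partir del producto lexi de index coding networks} and the coordinate-wise ``repetition'' of an existing linear solution to match parameters. The only cosmetic difference is the orientation of the factorization---the paper sets $\mathcal{N}_{1}=\mathcal{N}$ and $\mathcal{N}_{2}=\mathcal{N}^{\bullet(k-1)}$ (repeating the original $(1,n)$-solution $n^{k-1}$ times), whereas you set $\mathcal{N}_{1}=\mathcal{N}^{\bullet k}$ and $\mathcal{N}_{2}=\mathcal{N}$ (repeating the inductive $(1,n^{k})$-solution $n$ times); both choices work equally well.
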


\begin{proof}
By induction, case $k=2$, take $\mathcal{N}_{1}=\mathcal{N}_{2}=\mathcal{N}$
in Lemma \ref{lem:codigo linear a partir del producto lexi de index coding networks}
and note that $\mathcal{N}_{2}$ has a $\left(n,n^{2}\right)$-linear
solution by repetition of the given solution of $\mathcal{N}$. We
get a $\left(1,n^{2}\right)$-linear solution of $\mathcal{N}^{\bullet2}$.
Now, we suppose that case $k-1$ holds i.e. $\mathcal{N}^{\bullet k-1}$
has a $\left(1,n^{k-1}\right)$-linear solution. Take $\mathcal{N}_{1}=\mathcal{N}$,
$\mathcal{N}_{2}=\mathcal{N}^{\bullet k-1}$ in Lemma \ref{lem:codigo linear a partir del producto lexi de index coding networks}
and note that $\mathcal{N}_{1}$ has a $\left(n^{k-1},n^{k}\right)$-linear
solution by repetition of the given solution of $\mathcal{N}$. Then,
$\mathcal{N}^{\bullet k}$ has a $\left(1,n^{k}\right)$-linear solution.
\end{proof}
\begin{thm}
\label{Teorema separaci=0000F3n de redes via caracter=0000EDsticas lineal y lineal}For
any $k,n\in\mathbb{N}$, $n\geq2$. We have,

(i) $\mathcal{N}_{\mathcal{A}_{n}}^{\bullet k}\left[\left(n+2\right)^{k}\right]$
is linearly solvable over a field $\mathbb{F}$ if, and only if, $\text{char}\left(\mathbb{F}\right)$
divides $n$. Also, when $\text{char}\left(\mathbb{F}\right)\nmid n$,
\[
\left(\frac{n+2}{n+3}\right)^{k}\leq\mathrm{C}_{\text{linear}}^{\mathbb{F}}\left(\mathcal{N}_{\mathcal{A}_{n}}^{\bullet k}\left[\left(n+2\right)^{k}\right]\right)\leq\left(\frac{5n^{3}+22n^{2}+31n+14}{5n^{3}+22n^{2}+31n+15}\right)^{k}.
\]

(ii) $\mathcal{N}_{\mathcal{B}_{n}}^{\bullet k}\left[\left(n+2\right)^{k}\right]$
is linearly solvable over a field $\mathbb{F}$ if, and only if, $\text{char}\left(\mathbb{F}\right)$
does not divide $n$. Also, when $\text{char}\left(\mathbb{F}\right)\mid n$,
\[
\left(\frac{n+2}{n+3}\right)^{k}\leq\mathrm{C}_{\text{linear}}^{\mathbb{F}}\left(\mathcal{N}_{\mathcal{B}_{n}}^{\bullet k}\left[\left(n+2\right)^{k}\right]\right)\leq\left(\frac{n^{3}+8n^{2}+19n+14}{n^{3}+8n^{2}+19n+15}\right)^{k}.
\]
 
\end{thm}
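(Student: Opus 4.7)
Both parts (i) and (ii) follow the same three-step scheme: produce a matroid-based linear solution in the favourable characteristic, derive an upper bound via the appropriate characteristic-dependent linear program in the unfavourable characteristic, and build an explicit linear code meeting the stated lower bound in that unfavourable characteristic. The roles of $\mathcal{A}_n$ and $\mathcal{B}_n$, and of the two divisibility conditions, are symmetric between the two parts, so I will sketch (i) and indicate how to dualise for (ii).

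For the solvability direction of (i), when $\mathrm{char}(\mathbb{F})\mid n$ the family $\mathcal{A}_n$ consists of circuits of the representable matroid $\mathcal{M}(L_n)$, and $\{A_1,\ldots,A_{n+1}\}$ is a set of size $r_{\mathcal{M}(L_n)}=n+1$ whose $\mathrm{cl}_{\mathcal{N}_{\mathcal{A}_n}}$-closure equals $S$ (via the circuits $A_{[n+1]}C$ and $A_{[n+1]-i}B_i$), so $\mathcal{N}_{\mathcal{A}_n}$ is an index coding-subnetwork of $\mathcal{N}_{\mathcal{M}(L_n)}$. Proposition~\ref{proposici=0000F3n sobre capacidades de subredes} then furnishes a $(1,n+2)$-linear solution of $\mathcal{N}_{\mathcal{A}_n}$ over $\mathbb{F}$, since $|S|=2n+3$. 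Lemma~\ref{prop: codigo lineal para una potencia lexicografica de una index coding network} promotes it to a $(1,(n+2)^k)$-linear solution of $\mathcal{N}_{\mathcal{A}_n}^{\bullet k}$, and Lemma~\ref{lem:relaci=0000F3n entre un m index coding y index coding} with $m=(n+2)^k$ converts it into a rate-$1$ linear solution of $\mathcal{N}_{\mathcal{A}_n}^{\bullet k}[(n+2)^k]$, i.e.\ linear solvability. Part (ii) is identical with $\mathcal{B}_n$ in place of $\mathcal{A}_n$ and the opposite divisibility condition.

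For the upper bound, hence non-solvability, in (i) when $\mathrm{char}(\mathbb{F})\nmid n$, inequality~\eqref{eq:eq:desigualdad del esquema tight y homo NO divisores de n} is a valid constraint of LP-$\mathcal{B}_n$, so $\mathrm{C}^{\mathbb{F}}_{\mathrm{linear}}(\mathcal{N})\le\mathrm{B}_{\mathcal{B}_n}(\mathcal{N})$ for every index coding-network $\mathcal{N}$ over such a field. Combining sub-multiplicativity of $\mathrm{B}_{\mathcal{B}_n}$ under $\bullet$ (inherited from super-multiplicativity of $\mathrm{b}_{\mathcal{B}_n}$) with $\mathrm{C}^{\mathbb{F}}_{\mathrm{linear}}(\mathcal{N}[m])=m\,\mathrm{C}^{\mathbb{F}}_{\mathrm{linear}}(\mathcal{N})$ gives
\[
\mathrm{C}^{\mathbb{F}}_{\mathrm{linear}}\!\left(\mathcal{N}_{\mathcal{A}_n}^{\bullet k}[(n+2)^k]\right)\le\bigl((n+2)\,\mathrm{B}_{\mathcal{B}_n}(\mathcal{N}_{\mathcal{A}_n})\bigr)^{k},
\]
so the task reduces to the single-network LP computation $(n+2)\mathrm{B}_{\mathcal{B}_n}(\mathcal{N}_{\mathcal{A}_n})=(5n^3+22n^2+31n+14)/(5n^3+22n^2+31n+15)$, obtained by writing the flow constraints generated by the closure of $\mathcal{N}_{\mathcal{A}_n}$, combining them with the submodularity constraints and with~\eqref{eq:eq:desigualdad del esquema tight y homo NO divisores de n}, and matching a primal assignment of $(z_Y)_{Y\subseteq S}$ with a dual combination to verify equality. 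Since this ratio is $<1$, the network fails to be linearly solvable in that characteristic. Part (ii) proceeds analogously with LP-$\mathcal{A}_n$ and~\eqref{eq:desigualdad del esquema tight y homo divisores de n} applied to $\mathcal{N}_{\mathcal{B}_n}$, the corresponding value being $(n^3+8n^2+19n+14)/(n^3+8n^2+19n+15)$.

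For the lower bound in (i), I would exhibit a $(1,n+3)$-linear solution of $\mathcal{N}_{\mathcal{A}_n}$ over every $\mathbb{F}$ with $\mathrm{char}(\mathbb{F})\nmid n$ by taking an $(n+2)\times(2n+3)$ parity-check matrix $H$ of $L_n$ and broadcasting $\bigl(Hx,\sum_{i=1}^{n+1}x_{B_i}\bigr)\in\mathbb{F}^{n+3}$: the first $n+2$ coordinates decode every circuit in $\mathcal{A}_n\setminus\{B_{[n+1]}\}$, all of which remain circuits of $\mathcal{M}(L_n)$ in every characteristic, while the extra coordinate handles the receivers coming from $B_{[n+1]}$ via $x_{B_i}=\sum_j x_{B_j}-\sum_{j\ne i}x_{B_j}$, which is well defined by the hypothesis on the characteristic. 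Thus $\mathrm{C}^{\mathbb{F}}_{\mathrm{linear}}(\mathcal{N}_{\mathcal{A}_n})\ge 1/(n+3)$, and super-multiplicativity of capacity under $\bullet$ together with multiplicativity under $[m]$ deliver the claimed $((n+2)/(n+3))^{k}$ bound. For (ii) the same construction works with $x_C$ replacing $\sum_i x_{B_i}$ as the extra coordinate, since this directly handles the circuits $B_{[n+1]}C\in\mathcal{B}_n$ that are not circuits of $\mathcal{M}(L_n)$ when $\mathrm{char}(\mathbb{F})\mid n$. The only non-routine ingredient is the LP optimisation invoked in the upper-bound step: pinning down $\mathrm{b}_{\mathcal{B}_n}(\mathcal{N}_{\mathcal{A}_n})$ and $\mathrm{b}_{\mathcal{A}_n}(\mathcal{N}_{\mathcal{B}_n})$ to the exact rational values stated is the main obstacle, since each of these programs has $2^{|S|}$ variables and requires a careful choice of feasible primal and dual solutions realising equality.
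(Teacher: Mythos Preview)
Your three-step scheme matches the paper's proof almost exactly: matroid-based solvability via Proposition~\ref{proposici=0000F3n sobre capacidades de subredes} and Lemmas~\ref{prop: codigo lineal para una potencia lexicografica de una index coding network}--\ref{lem:relaci=0000F3n entre un m index coding y index coding} in the favourable characteristic, an LP-based upper bound in the unfavourable one, and an explicit $(1,n+3)$-code for the lower bound. Two points deserve comment.

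First, you overstate what the LP step requires. You write that one must pin down $\mathrm{b}_{\mathcal{B}_n}(\mathcal{N}_{\mathcal{A}_n})$ and $\mathrm{b}_{\mathcal{A}_n}(\mathcal{N}_{\mathcal{B}_n})$ \emph{exactly}, via matched primal and dual solutions, and flag this as ``the main obstacle''. The paper needs only the one-sided bound $\mathrm{b}_{\mathcal{B}_n}(\mathcal{N}_{\mathcal{A}_n})\ge (5n^{3}+22n^{2}+31n+15)/(5n^{2}+12n+7)$, and it obtains this without any dual certificate or primal witness: every feasible $(z_Y)$ for $\mathcal{N}_{\mathcal{A}_n}$ satisfies $z_Y\le z_\emptyset+r_{\mathcal{M}(L_n)}(Y)$ for dependent $Y$ and $|Y|+n+2\le z_Y\le |Y|+z_\emptyset$ for independent $Y$ (the matroid taken in characteristic dividing $n$); substituting these directly into the single additional constraint~\eqref{eq:eq:desigualdad del esquema tight y homo NO divisores de n} and solving for $z_\emptyset$ gives the stated bound in one line. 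Multiplying by $(n+2)$ then yields the ratio $(5n^{3}+22n^{2}+31n+14)/(5n^{3}+22n^{2}+31n+15)$. The same simplification handles part (ii) with~\eqref{eq:desigualdad del esquema tight y homo divisores de n}. So what you identify as the hard part is in fact routine.

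Second, your $(1,n+3)$-code is a correct variant of the paper's. The paper appends $x_C$ (not $\sum_i x_{B_i}$) to the broadcast of the $(1,n+2)$-solution of $\mathcal{N}_{\mathcal{B}_n}$, which gives each $\mathcal{A}_n$-receiver $(B_i,B_{[n+1]-i})$ exactly the side information already possessed by the $\mathcal{B}_n$-receiver $(B_i,B_{[n+1]-i}\cup\{C\})$; part (ii) is then literally ``analogous''. Your extra coordinate $\sum_i x_{B_i}$ also works for (i), but note that the identity $x_{B_i}=\sum_j x_{B_j}-\sum_{j\ne i}x_{B_j}$ holds over every field, so the appeal to the characteristic hypothesis there is spurious.
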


\begin{proof}
For (i), we have that $\mathcal{N}_{\mathcal{A}_{n}}$ is an index
coding-subnetwork of any $\mathcal{N}_{\mathcal{M}\left(L_{n}\right)}$
when $\text{char}\left(\mathbb{F}\right)$ divides $n$. Using Lemma
\ref{proposici=0000F3n sobre capacidades de subredes}, we have $\mathrm{C}\left(\mathcal{N}_{\mathcal{A}_{n}}\right)=\mathrm{C}_{\text{linear}}^{\mathbb{F}}\left(\mathcal{N}_{\mathcal{A}_{n}}\right)=\frac{1}{n+2}$
when $\text{char}\left(\mathbb{F}\right)$ divides $n$ and this capacity
is achieved by a $\left(1,n+2\right)$-linear solution over $\mathbb{F}$.
By Lemma \ref{prop: codigo lineal para una potencia lexicografica de una index coding network}
with $\mathcal{N}=\mathcal{N}_{\mathcal{A}_{n}}$, $\mathcal{N}_{\mathcal{A}_{n}}^{\bullet k}$
has a $\left(1,\left(n+2\right)^{k}\right)$-linear solution over
$\mathbb{F}$. Finally, by Lemma \ref{lem:relaci=0000F3n entre un m index coding y index coding},
$\mathcal{N}_{\mathcal{A}_{n}}^{\bullet k}\left[\left(n+2\right)^{k}\right]$
has a $\left(\left(n+2\right)^{k},\left(n+2\right)^{k}\right)$-linear
solution over $\mathbb{F}$ which implies that $\mathcal{N}_{\mathcal{A}_{n}}^{\bullet k}\left[\left(n+2\right)^{k}\right]$
is linearly solvable over a field $\mathbb{F}$ whose $\text{char}\left(\mathbb{F}\right)$
divides $n$. We estimate an upper bound on $\mathrm{C}_{\text{linear}}^{\mathbb{F}}\left(\mathcal{N}_{\mathcal{A}_{n}}\right)$
when $\text{char}\left(\mathbb{F}\right)$ does not divide $n$, using
the LP-$\mathcal{B}_{n}$: Let $\left(z_{S}\right)_{S\subseteq V}$
be a solution of LP-$\mathcal{B}_{n}$ for $\mathcal{N}_{\mathcal{A}_{n}}$.
From definition of $\mathcal{N}_{\mathcal{A}_{n}}$, we have:

(a) If $Y$ is a dependent set in each $\mathcal{M}\left(L_{n}\right)$
($\text{char}\left(\mathbb{F}\right)$ divides $n$), then $z_{Y}\leq z_{\emptyset}+r_{\mathcal{M}\left(L_{n}\right)}\left(Y\right)$. 

(b) If $Y$ is an independent set in each $\mathcal{M}\left(L_{n}\right)$
($\text{char}\left(\mathbb{F}\right)$ divides $n$), then $\left|Y\right|+n+2\leq z_{Y}\leq\left|Y\right|+z_{\emptyset}$.

We can use constraints implied by these conditions along with the
constraint \ref{eq:eq:desigualdad del esquema tight y homo NO divisores de n}
to get $z_{\emptyset}\geq\frac{5n^{3}+22n^{2}+31n+15}{5n^{2}+12n+7}$
which implies that $\mathrm{b}_{\mathcal{B}_{n}}\left(\mathcal{N}_{\mathcal{A}_{n}}\right)\geq\frac{5n^{3}+22n^{2}+31n+15}{5n^{2}+12n+7}$.
By super-multiplicative of $\mathrm{b}_{\mathcal{B}_{n}}$ under lexicographic
products, $\mathrm{b}_{\mathcal{B}_{n}}\left(\mathcal{N}_{\mathcal{A}_{n}}^{\bullet k}\right)\geq\left(\frac{5n^{3}+22n^{2}+31n+15}{5n^{2}+12n+7}\right)^{k}$.
Then $\mathrm{C}_{\text{linear}}^{\mathbb{F}}\left(\mathcal{N}_{\mathcal{A}_{n}}^{\bullet k}\right)\leq\left(\frac{5n^{2}+12n+7}{5n^{3}+22n^{2}+31n+15}\right)^{k}$.
Hence, using Lemma \ref{lem:relaci=0000F3n entre un m index coding y index coding}
with $m=\left(n+2\right)^{k}$, $\mathrm{C}_{\text{linear}}^{\mathbb{F}}\left(\mathcal{N}_{\mathcal{A}_{n}}^{\bullet k}\left(\left(n+2\right)^{k}\right)\right)\leq\left(\frac{5n^{3}+22n^{2}+31n+14}{5n^{3}+22n^{2}+31n+15}\right)^{k}$
$<1$, when $\text{char}\left(\mathbb{F}\right)$ does not divide
$n$.

For (ii), we have that $\mathcal{N}_{\mathcal{B}_{n}}$ is an index
coding-subnetwork of any $\mathcal{N}_{\mathcal{M}\left(L_{n}\right)}$
when $\text{char}\left(\mathbb{F}\right)$ does not divide $n$. Using
Lemma \ref{proposici=0000F3n sobre capacidades de subredes}, we have
$\mathrm{C}\left(\mathcal{N}_{\mathcal{B}_{n}}\right)=\mathrm{C}_{\text{linear}}^{\mathbb{F}}\left(\mathcal{N}_{\mathcal{B}_{n}}\right)=\frac{1}{n+2}$
when $\text{char}\left(\mathbb{F}\right)$ does not divide $n$ and
this capacity is achieved by a $\left(1,n+2\right)$-linear solution
over $\mathbb{F}$. Then, we apply an argument as in (i) to get the
required linear solution of $\mathcal{N}_{\mathcal{B}_{n}}^{\bullet k}\left[\left(n+2\right)^{k}\right]$.
We estimate an upper bound on $\mathrm{C}_{\text{linear}}^{\mathbb{F}}\left(\mathcal{N}_{\mathcal{B}_{n}}\right)$
when $\text{char}\left(\mathbb{F}\right)$ divides $n$ using the
LP-$\mathcal{A}_{n}$: Let $\left(z_{S}\right)_{S\subseteq V}$ be
a solution of LP-$\mathcal{A}_{n}$ for $\mathcal{N}_{\mathcal{B}_{n}}$.
From definition of $\mathcal{N}_{\mathcal{B}_{n}}$, we have that
this network satisfies conditions (a)-(b) of part (i) when the matroid
$\mathcal{M}\left(L_{n}\right)$ is taken over a field $\mathbb{F}$
whose $\text{char}\left(\mathbb{F}\right)$ does not divide $n$.
We can use constraints implied by these conditions along with the
constraint \ref{eq:desigualdad del esquema tight y homo divisores de n}
to get $z_{\emptyset}\geq\frac{n^{3}+8n^{2}+19n+15}{n^{2}+6n+7}$
which implies that $\mathrm{b}_{\mathcal{A}_{n}}\left(\mathcal{N}_{\mathcal{B}_{n}}\right)\geq\frac{n^{3}+8n^{2}+19n+15}{n^{2}+6n+7}$.
Then, by super-multiplicative of $\mathrm{b}_{\mathcal{A}_{n}}$ under
lexicographic products, $\mathrm{b}_{\mathcal{A}_{n}}\left(\mathcal{N}_{\mathcal{B}_{n}}^{\bullet k}\right)\geq\left(\frac{n^{3}+8n^{2}+19n+15}{n^{2}+6n+7}\right)^{k}$.
Thus, $\mathrm{C}_{\text{linear}}^{\mathbb{F}}\left(\mathcal{N}_{\mathcal{B}_{n}}^{\bullet k}\right)\leq\left(\frac{n^{2}+6n+7}{n^{3}+8n^{2}+19n+15}\right)^{k}$,
when $\text{char}\left(\mathbb{F}\right)$ divide $n$. Hence, using
Lemma \ref{lem:relaci=0000F3n entre un m index coding y index coding}
with $m=\left(n+2\right)^{k}$, $\mathrm{C}_{\text{linear}}^{\mathbb{F}}\left(\mathcal{N}_{\mathcal{B}_{n}}^{\bullet k}\left(\left(n+2\right)^{k}\right)\right)\leq\left(\frac{n^{3}+8n^{2}+19n+14}{n^{3}+8n^{2}+19n+15}\right)^{k}<1$,
when $\text{char}\left(\mathbb{F}\right)$ divides $n$.

For the remaining lower bounds on the linear capacities over fields
in which the networks are not linearly solvable, we use the network
topology in common of $\mathcal{N}_{\mathcal{A}_{n}}$ and $\mathcal{N}_{\mathcal{B}_{n}}$:
We add the message of $C$ to the broadcast message of the $\left(1,n+2\right)$-linear
solution of $\mathcal{N}_{\mathcal{B}_{n}}$ over $\mathbb{F}$ when
$\text{char}\left(\mathbb{F}\right)$ does not divide $n$ to obtain
a $\left(1,n+3\right)$-linear code which is a linear solution of
$\mathcal{N}_{\mathcal{A}_{n}}$ over this field. Then, the solution
is extended to a $\left(\left(n+2\right)^{k},\left(n+3\right)^{k}\right)$-linear
solution of $\mathcal{N}_{\mathcal{A}_{n}}^{\bullet k}\left[\left(n+2\right)^{k}\right]$
yielding $\left(\frac{n+2}{n+3}\right)^{k}\leq\mathrm{C}_{\text{linear}}^{\mathbb{F}}\left(\mathcal{N}_{\mathcal{A}_{n}}^{\bullet k}\left[\left(n+2\right)^{k}\right]\right)$.
In an analogous way, we get the respective lower bound on $\mathrm{C}_{\text{linear}}^{\mathbb{F}}\left(\mathcal{N}_{\mathcal{B}_{n}}^{\bullet k}\left[\left(n+2\right)^{k}\right]\right)$,
when $\text{char}\left(\mathbb{F}\right)$ divides $n$.
\end{proof}
\begin{cor}
Let $P$ be a finite or co-finite set of primes. There exists a sequence
of networks $\left(\mathcal{N}_{P}^{k}\right)_{k}$ in which each
member is linearly solvable over a field $\mathbb{F}$ if and only
if the characteristic of $\mathbb{F}$ is in $P$. Furthermore, when
$\text{char}\left(\mathbb{F}\right)$ is not in $P$, $\mathrm{C}_{\text{linear}}^{\mathbb{F}}\left(\mathcal{N}_{P}^{k}\right)\rightarrow0$
as $k\rightarrow\infty$.
\end{cor}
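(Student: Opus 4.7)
The plan is to reduce the corollary to Theorem~\ref{Teorema separaci=0000F3n de redes via caracter=0000EDsticas lineal y lineal} by encoding membership in $P$ as a divisibility condition. Specifically, I would choose an integer $n$ built as a product of distinct primes so that the statement ``$\text{char}(\mathbb{F})$ divides $n$'' becomes equivalent to ``$\text{char}(\mathbb{F})\in P$'' in the finite case, or to its negation in the co-finite case. After this translation, one of the two parts of the theorem supplies both the iff statement and the capacity bound in the corollary.

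If $P=\{p_{1},\ldots,p_{r}\}$ is a nonempty finite set of primes, I set $n:=p_{1}p_{2}\cdots p_{r}\geq 2$ and $\mathcal{N}_{P}^{k}:=\mathcal{N}_{\mathcal{A}_{n}}^{\bullet k}\bigl[(n+2)^{k}\bigr]$. Since a prime $p$ divides $n$ if and only if $p\in P$, part~(i) of Theorem~\ref{Teorema separaci=0000F3n de redes via caracter=0000EDsticas lineal y lineal} says exactly that $\mathcal{N}_{P}^{k}$ is linearly solvable over $\mathbb{F}$ iff $\text{char}(\mathbb{F})\in P$, and gives, for $\text{char}(\mathbb{F})\notin P$, the bound $\mathrm{C}_{\text{linear}}^{\mathbb{F}}(\mathcal{N}_{P}^{k})\leq\bigl(\tfrac{5n^{3}+22n^{2}+31n+14}{5n^{3}+22n^{2}+31n+15}\bigr)^{k}$, whose base is fixed in $(0,1)$ and so tends to $0$ as $k\to\infty$. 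If instead $P$ is co-finite with nonempty complement $\{q_{1},\ldots,q_{s}\}$, I set $n:=q_{1}\cdots q_{s}$ and $\mathcal{N}_{P}^{k}:=\mathcal{N}_{\mathcal{B}_{n}}^{\bullet k}\bigl[(n+2)^{k}\bigr]$; then $\text{char}(\mathbb{F})\nmid n$ iff $\text{char}(\mathbb{F})\in P$, and part~(ii) of the theorem supplies the equivalence and the bound $\bigl(\tfrac{n^{3}+8n^{2}+19n+14}{n^{3}+8n^{2}+19n+15}\bigr)^{k}\to 0$.

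The degenerate extremes, where $P$ is empty or equals the full set of primes, are either trivial or render the vanishing condition vacuous, and can be dispatched by an ad hoc network (an unsolvable one or a trivially always-solvable one). Since Theorem~\ref{Teorema separaci=0000F3n de redes via caracter=0000EDsticas lineal y lineal} already packages both the characteristic-dependent linear solvability and the strict capacity bounds coming from LP-$\mathcal{A}_{n}$ and LP-$\mathcal{B}_{n}$, no genuine obstacle remains; the only new ingredient is the elementary observation that forming a product of primes converts a finite set-membership question into a divisibility question, after which the two parts of the theorem line up verbatim with the two alternatives in the corollary's hypothesis.
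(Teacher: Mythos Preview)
Your proposal is correct and follows essentially the same approach as the paper's proof, which simply says to take $n=\prod_{p\in P}p$ when $P$ is finite and $n=\prod_{p\notin P}p$ when $P$ is co-finite, and then invoke the preceding theorem. Your version is in fact more explicit than the paper's, spelling out which of $\mathcal{N}_{\mathcal{A}_{n}}$ or $\mathcal{N}_{\mathcal{B}_{n}}$ to use in each case and noting why the capacity bounds tend to zero, and also addressing the degenerate extremes that the paper leaves implicit.
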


\begin{proof}
In the previous theorem, take $n=\underset{p\in P}{\prod}P$ if $P$
is finite and $n=\underset{p\notin P}{\prod}P$ if $P$ is co-finite.
\end{proof}
The following corollary is a straightforward consequence of the theorem
\ref{Teorema separaci=0000F3n de redes via caracter=0000EDsticas lineal y lineal},
and it is a generalization of \cite[Theorem 1.2]{4}. The proof is
followed taking: $\mathcal{N}_{n}=\mathcal{N}_{\mathcal{A}_{n}}\bullet\mathcal{N}_{\mathcal{B}_{n}}$,
and for all $k\in\mathbb{N}$, $\mathcal{N}_{n}^{k}:=\mathcal{N}_{n}^{\bullet k}\left[\left(n+2\right)^{2k}\right]$.
Then, we apply an argument as the previous theorem.
\begin{cor}
\label{Teorema separaci=0000F3n de redes via lineal y no lineal}There
exists a infinite collection of sequences of networks $\left\{ \left(\mathcal{N}_{n}^{k}\right)_{k}:n\in\mathbb{N},n\geq2\right\} $
in which each member of each sequence is asymptotically solvable but
is not asymptotically linearly solvable and the linear capacity $\rightarrow0$
as $k\rightarrow\infty$ in each sequence.
\end{cor}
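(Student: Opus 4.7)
The plan is to take $\mathcal{N}_{n}:=\mathcal{N}_{\mathcal{A}_{n}}\bullet\mathcal{N}_{\mathcal{B}_{n}}$ and set $\mathcal{N}_{n}^{k}:=\mathcal{N}_{n}^{\bullet k}\left[\left(n+2\right)^{2k}\right]$ for each $n\geq2$ and $k\in\mathbb{N}$. The driving idea is that, by construction, the factor $\mathcal{N}_{\mathcal{A}_{n}}$ is linearly efficient exactly over characteristics dividing $n$ while $\mathcal{N}_{\mathcal{B}_{n}}$ is linearly efficient exactly over characteristics not dividing $n$, so no single field can make both factors simultaneously efficient, while non-linearly each factor achieves capacity $1/(n+2)$ by Proposition \ref{proposici=0000F3n sobre capacidades de subredes}. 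The lexicographic product then forces linear inefficiency over every field, and iterating via lexicographic powers amplifies the gap exponentially.

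For asymptotic (non-linear) solvability I would use Proposition \ref{proposici=0000F3n sobre capacidades de subredes} (applied for each factor over a field in the appropriate characteristic) to obtain $\mathrm{C}(\mathcal{N}_{\mathcal{A}_{n}})=\mathrm{C}(\mathcal{N}_{\mathcal{B}_{n}})=1/(n+2)$. Super-multiplicativity of $\mathrm{C}$ under lexicographic products yields $\mathrm{C}(\mathcal{N}_{n}^{\bullet k})\geq 1/(n+2)^{2k}$, and Lemma \ref{lem:relaci=0000F3n entre un m index coding y index coding} with $m=(n+2)^{2k}$ multiplies this by $(n+2)^{2k}$, giving $\mathrm{C}(\mathcal{N}_{n}^{k})\geq 1$, which is precisely asymptotic solvability.

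For the linear part, I would fix an arbitrary field $\mathbb{F}$ and split on whether $\text{char}(\mathbb{F})\mid n$. If $\text{char}(\mathbb{F})\mid n$, then $\mathcal{N}_{\mathcal{A}_{n}}$ is the well-behaved factor with $\mathrm{b}_{\mathcal{A}_{n}}(\mathcal{N}_{\mathcal{A}_{n}})\geq n+2$, while Theorem \ref{Teorema separaci=0000F3n de redes via caracter=0000EDsticas lineal y lineal}(ii) (its LP-$\mathcal{A}_{n}$ step) gives $\mathrm{b}_{\mathcal{A}_{n}}(\mathcal{N}_{\mathcal{B}_{n}})\geq (n^{3}+8n^{2}+19n+15)/(n^{2}+6n+7)$. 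The case $\text{char}(\mathbb{F})\nmid n$ is symmetric via LP-$\mathcal{B}_{n}$ and the bounds in part (i). In either case I invoke super-multiplicativity of $\mathrm{b}_{\mathcal{A}_{n}}$ or $\mathrm{b}_{\mathcal{B}_{n}}$ under lexicographic products (as in \cite[Theorem 3.4]{4}) to lift the two lower bounds to $\mathcal{N}_{n}^{\bullet k}$, and then apply Lemma \ref{lem:relaci=0000F3n entre un m index coding y index coding} with $m=(n+2)^{2k}$ to convert this into an upper bound on $\mathrm{C}_{\text{linear}}^{\mathbb{F}}(\mathcal{N}_{n}^{k})$. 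The telescoping cancellation with the factors $(n+2)^{2k}$ leaves precisely the ratios $\left((n^{3}+8n^{2}+19n+14)/(n^{3}+8n^{2}+19n+15)\right)^{k}$ or $\left((5n^{3}+22n^{2}+31n+14)/(5n^{3}+22n^{2}+31n+15)\right)^{k}$ already appearing in Theorem \ref{Teorema separaci=0000F3n de redes via caracter=0000EDsticas lineal y lineal}; both bases are strictly less than $1$, yielding the desired conclusion in one stroke.

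The main obstacle is bookkeeping: selecting the correct LP (LP-$\mathcal{A}_{n}$ vs. LP-$\mathcal{B}_{n}$) on each side of the characteristic dichotomy, and confirming that the super-multiplicativity statement of \cite[Theorem 3.4]{4} applies to the lexicographic product $\mathcal{N}_{\mathcal{A}_{n}}\bullet\mathcal{N}_{\mathcal{B}_{n}}$ of two \emph{distinct} networks, not merely to lexicographic powers of a single network. Once this verification is carried out, the proof reduces to the same algebraic simplification used in the preceding theorem, with an extra factor of $n+2$ for the trivial component absorbed into the $(n+2)^{2k}$ scaling.
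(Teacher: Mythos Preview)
Your proposal is correct and follows exactly the construction and argument the paper outlines: the paper also defines $\mathcal{N}_{n}=\mathcal{N}_{\mathcal{A}_{n}}\bullet\mathcal{N}_{\mathcal{B}_{n}}$, sets $\mathcal{N}_{n}^{k}=\mathcal{N}_{n}^{\bullet k}\left[(n+2)^{2k}\right]$, and says the rest proceeds ``as the previous theorem''. Your bookkeeping concern about super-multiplicativity for products of two \emph{distinct} networks is already handled, since the paper's appeal to \cite[Theorem 3.4]{4} is stated for arbitrary lexicographic products, not merely powers.
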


The network coding gain is equal to the coding capacity divided by
the routing capacity. In \cite{11,12}, there are two sequences of
networks $\mathcal{N}_{i}\left(k\right)$ ($i=1,2$) such that the
coding gain $\rightarrow\infty$ as $k\rightarrow\infty$. The routing
capacities of $\mathcal{N}_{P}^{k}$ and $\mathcal{N}_{n}^{k}$ are
$\left(\frac{n+2}{2n+3}\right)^{k}$ and $\left(\frac{n^{2}+2n+4}{4n^{2}+12n+9}\right)^{k},$
respectively. Hence, any sequence of networks presented previously
satisfies this property.
\begin{cor}
The network coding gain of the sequences $\left(\mathcal{N}_{P}^{k}\right)_{k}$
and $\left(\mathcal{N}_{n}^{k}\right)_{k}$ $\rightarrow\infty$ as
$k\rightarrow\infty$.
\end{cor}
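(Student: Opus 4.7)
The plan is to show that the coding capacity of each member of both sequences is at least $1$, while the routing capacities, already recorded in the paragraph preceding the statement as $\left(\frac{n+2}{2n+3}\right)^{k}$ and $\left(\frac{n^{2}+2n+4}{4n^{2}+12n+9}\right)^{k}$, decay geometrically in $k$. Dividing then forces the gain to infinity.

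For $\mathcal{N}_{P}^{k}$, Theorem \ref{Teorema separaci=0000F3n de redes via caracter=0000EDsticas lineal y lineal} together with the choice of $n$ made in the proof of the previous corollary guarantees linear solvability over some finite field $\mathbb{F}$ whose characteristic lies in $P$, i.e., the existence of an $(m,m)$-linear solution for some $m\geq 1$. Such a solution is in particular a $(1,1)$-solution over the alphabet $\mathbb{F}^{m}$, so $\mathrm{C}(\mathcal{N}_{P}^{k}) \geq 1$. For $\mathcal{N}_{n}^{k} = \mathcal{N}_{n}^{\bullet k}[(n+2)^{2k}]$ with $\mathcal{N}_{n} = \mathcal{N}_{\mathcal{A}_{n}}\bullet\mathcal{N}_{\mathcal{B}_{n}}$, Proposition \ref{proposici=0000F3n sobre capacidades de subredes} gives $\mathrm{C}(\mathcal{N}_{\mathcal{A}_{n}}) = \mathrm{C}(\mathcal{N}_{\mathcal{B}_{n}}) = 1/(n+2)$ (each computed over a field in its own favorable characteristic), so super-multiplicativity of $\mathrm{C}$ under lexicographic products yields $\mathrm{C}(\mathcal{N}_{n}^{\bullet k}) \geq 1/(n+2)^{2k}$, and then Lemma \ref{lem:relaci=0000F3n entre un m index coding y index coding} with $m = (n+2)^{2k}$ multiplies this by $m$ to produce $\mathrm{C}(\mathcal{N}_{n}^{k}) \geq 1$.

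Dividing the two estimates then yields
\[
\mathrm{Gain}(\mathcal{N}_{P}^{k}) \;\geq\; \left(\frac{2n+3}{n+2}\right)^{k}, \qquad \mathrm{Gain}(\mathcal{N}_{n}^{k}) \;\geq\; \left(\frac{4n^{2}+12n+9}{n^{2}+2n+4}\right)^{k},
\]
and since each base ratio is strictly greater than $1$ for every $n\geq 2$, both lower bounds tend to infinity as $k\to\infty$. The only nontrivial ingredient is the explicit value of the routing capacities themselves; the hardest step, such as it is, consists of justifying those formulas, which one does by noting that routing capacity is multiplicative under lexicographic products and then computing directly on the base networks $\mathcal{N}_{\mathcal{A}_{n}}$ and $\mathcal{N}_{\mathcal{A}_{n}}\bullet\mathcal{N}_{\mathcal{B}_{n}}$. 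This amounts to bookkeeping rather than a conceptual obstacle; once those numbers are in hand, the corollary follows immediately.
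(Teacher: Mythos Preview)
Your proposal is correct and follows essentially the same approach as the paper, which itself offers no formal proof: the paper simply records the routing capacities $\left(\frac{n+2}{2n+3}\right)^{k}$ and $\left(\frac{n^{2}+2n+4}{4n^{2}+12n+9}\right)^{k}$ in the paragraph preceding the corollary and declares the result to follow. Your write-up supplies the missing step of checking $\mathrm{C}\geq 1$ explicitly via Theorem~\ref{Teorema separaci=0000F3n de redes via caracter=0000EDsticas lineal y lineal}, Proposition~\ref{proposici=0000F3n sobre capacidades de subredes}, super-multiplicativity, and Lemma~\ref{lem:relaci=0000F3n entre un m index coding y index coding}, which is more than the paper does; one small wording remark is that the parenthetical ``each computed over a field in its own favorable characteristic'' is unnecessary, since $\mathrm{C}$ in Proposition~\ref{proposici=0000F3n sobre capacidades de subredes} already denotes the alphabet-independent coding capacity, and it is that quantity (not a linear capacity over a fixed field) to which super-multiplicativity applies.
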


\section*{Acknowledgments}

The first author thanks the support provided by COLCIENCIAS through
Convocatoria 727 and the second author thanks the support provided
by Universidad Nacional de Colombia through The Hermes Research System
project 37216.

\end{document}